\newtheorem{example}{Example}
\newtheorem{defn}{Definition}
\newtheorem{thm}{Theorem}
\newtheorem{cor}[thm]{Corollary}
\newtheorem{note}{Remark}
\newtheorem{const}{Construction}
\newcommand{\bit}{\begin{itemize}}
\newcommand{\eit}{\end{itemize}}
\newcommand{\bcor}{\begin{cor}}
\newcommand{\ecor}{\end{cor}}
\newcommand{\beq}{\begin{equation}}
\newcommand{\eeq}{\end{equation}}
\newcommand{\beqn}{\begin{equation*}}
\newcommand{\eeqn}{\end{equation*}}
\newcommand{\bea}{\begin{eqnarray}}
\newcommand{\eea}{\end{eqnarray}}
\newcommand{\bean}{\begin{eqnarray*}}
\newcommand{\eean}{\end{eqnarray*}}
\newcommand{\ben}{\begin{enumerate}}
\newcommand{\een}{\end{enumerate}}
\newcommand{\bdefn}{\begin{defn}}
\renewcommand\footnotemark{}
\begin{document}
\sloppy
\title{Bounds on the Rate and Minimum Distance of Codes with Availability}

\author{
\IEEEauthorblockN{S. B. Balaji and P. Vijay Kumar, \it{Fellow}, \it{IEEE}}

\IEEEauthorblockA{Department of Electrical Communication Engineering, Indian Institute of Science, Bangalore.  \\ Email: balaji.profess@gmail.com, pvk1729@gmail.com} 

\thanks{P. Vijay Kumar is also an Adjunct Research Professor at the University of Southern California.  This research is supported in part by the National Science Foundation under Grant 1421848 and in part by an India-Israel UGC-ISF joint research program grant.} 
\thanks{S. B. Balaji would like to acknowledge the support of TCS research scholarship program.}
}
\maketitle

	\begin{abstract}
		In this paper we investigate bounds on rate and minimum distance of codes with $t$ availability. We present bounds on minimum distance of a code with $t$ availability that are tighter than existing bounds. For bounds on rate of a code with $t$ availability, we restrict ourselves to a sub-class of codes with $t$ availability called codes with strict $t$ availability and derive a tighter rate bound. Codes with strict $t$ availability can be defined as the null space of an $(m \times n)$ parity-check matrix $H$, where each row has weight $(r+1)$ and each column has weight $t$, with intersection between support of any two rows atmost one. We also present two general constructions for codes with $t$ availability.
	\end{abstract}


\begin{IEEEkeywords} Distributed storage, codes with locality, availability, multiple erasures.
\end{IEEEkeywords}

	\section{Introduction}
	Let $\mathcal{C}$ denote a linear $[n,k]$ code. The code is $\mathcal{C}$ said to have locality $r$ if each of the $n$ code symbols of $\mathcal{C}$ can be recovered by accessing at most $r$ other code symbols. Equivalently, there exist $n$ codewords ${\underline{h}_1 \cdots \underline{h}_n}$ in the dual code $\mathcal{C}^\perp$ such that $i \in \text{supp}(\underline{h}_i)$ and $|\text{supp}(\underline{h}_i)| \leq r+1$ for $1 \leq i \leq n$ where  $\text{supp}(\underline{h}_i)$ denotes the support of the codeword $\underline{h}_i$.
	
\paragraph{\textbf{Codes with Availability}}
Let $\mathcal{C}$ denote a linear $[n,k]$ code over $\mathbb{F}_q$. $\mathcal{C}$ is said to be a code with $t$ availability if for every code symbol $c_i$ in $\mathcal{C}$, there exist $t$ codewords $\underline{h}^i_1,...,\underline{h}
^i_t$ in the dual code, each of Hamming weight $\leq r+1$, such that $\text{supp}(\underline{h}^i_g) \cap \text{supp}(\underline{h}^i_j) = \{i\} $, $\forall\  1 \leq g \neq j \leq t$. We denote the matrix with these $\underline{h}^j_i$, $ \forall\ 1 \leq j \leq n, 1\leq i \leq t$ as rows by $H_{des}(\mathcal{C})$ (local parity check matrix of $\mathcal{C}$).

The parameter $r$ is called the locality parameter and we will refer to this class of codes as $(n,k,r,t)_{a}$ codes. When the parameters $n,k,r,t$ are clear from the context, we will simply term the code as a code with $t$ availability. 

\paragraph{\textbf{Codes with Strict Availability}}
Codes in this class can be defined as the null space of an $(m \times n)$ parity-check matrix $H$ over a field $\mathbb{F}_q$. In $H$ each row has weight $(r+1)$ and each column has weight $t$, with $nt=m(r+1)$.  Additionally, if the support sets of the rows in $H$ having a non-zero entry in the $i^\text{th}$ column are given respectively by $S^{(i)}_j, j=1,2,\cdots t$, then we must have that $S^{(i)}_j \cap S^{(i)}_l =   \{i \} , \forall\ 1 \leq j \neq l \leq t$.
Thus each code symbol $c_i$ is protected by a collection of $t$ orthogonal parity checks each of weight $(r+1)$. The parameter $r$ is called the locality parameter and we will formally refer to this class of codes as $(n,k,r,t)_{sa}$ codes.  Again, when the parameters $n,k,r,t$ are clear from the context, we will simply term the code as a code with strict $t$ availability.
We will focus only on $(n,k,r,t)_{sa}$ codes for rate bound calculations. We note here that in \cite{WanZhaLiu_Arb_Locality} the authors describe a construction of the most general known availability codes with rate $\frac{r}{r+t}$; these codes also belong to the class of $(n,k,r,t)_{sa}$ codes. Therefore there is evidence to suggest that if $(n,k,r,t)_{sa}$ codes are constructable then they are good candidates for best possible availability codes in terms of rate. Discussion of strict availability codes can be found in a recent paper \cite{CalderBankKadhe}.
	\subsection{Background}
	In \cite{GopHuaSimYek} Gopalan et al. introduced the concept of codes with locality (see also \cite{PapDim,OggDat}), where an erased code symbol is recovered by accessing a small subset of other code symbols. The size of this subset denoted by $r$ is typically much smaller than the dimension of the code, making the repair process more efficient compared to MDS codes. The authors of \cite{GopHuaSimYek} considered codes that can locally recover from single erasures. (see also \cite{GopHuaSimYek,HuaChenLi,KamPraLalKum,TamBar_Optimal_LRC})
	
	Approaches for local recovery from multiple erasures can be found in \cite{PraLalKum,KamPraLalKum,SonDauYueLi,WanZhaLiu_Arb_Locality,WangZhanLin,ZhaWanGe,HuaYaaUchSie,TamBarFro,WanZha_Combinatorial_Repair_locality,TamBar_Optimal_LRC,JuaHolOgg,RawPapDimVis_arxiv}. In this paper we concentrate on codes with $t$ availability and codes with strict $t$ availability.
	\subsection{Our Contributions} 
	Our contributions in this paper include the following:
	\ben[1.]
	\item In subsection \ref{SubSecGreedyAlgo} we derive a bound Eq.\eqref{G115} on rate of an $(n,k,r,3)_{sa}$ code using a greedy algorithm.
	\item In subsection \ref{SubSecTranspose} we derive a bound Eq.\eqref{rate_11} and \eqref{Rate_2} on rate of an $(n,k,r,t)_{sa}$ code for general $t$ using a simple observation on the parity check matrix of a strict availability code and its transpose. The resulting bounds are tighter than the bound given in \cite{TamBarFro} when applied to codes with strict availability.
	\item In section \ref{SecMinDistBounds} we derive field-size dependent (Eq.\eqref{Mindist_Bound1})and field-size independent bounds Eq.\eqref{Mindist_Bound2} and \eqref{Mindist_Bound3} on minimum distance of an $(n,k,r,t)_{a}$ code. Our bounds are tighter than the bounds in \cite{TamBarFro} and \cite{SonYue_Square_Code}.
	\item Finally in section \ref{SecCodeConstructions} we present two general constructions for codes with $t$ availability over the binary field. Some instances of these general constructions give codes with rate higher than $\frac{r}{r+t}$.
	\een


\section{Bounds on the rate of $(n,k,r,t)_{sa}$ codes}\label{SecRateBound}
In \cite{TamBarFro} the following rate-bound was given:

\bea
\text{If $\mathcal{C}$ is an $(n,k,r,t)_a$ code then,} \notag \\
\frac{k}{n} \leq \frac{1}{\prod_{j = 1}^{t}(1+\frac{1}{jr})}. \label{TamoBargRate}
\eea
We compare our rate-bounds with the above bound.
\subsection{A Rate Bound for $(n,k,r,3)_{sa}$ Codes}\label{SubSecGreedyAlgo}
We present a greedy algorithm and analyse the algorithm to get a bound on the rate of an $(n,k,r,3)_{sa}$ code. \\
Let $\mathcal{C}$ be an $(n,k,r,3)_{sa}$ code over a field $\mathbb{F}_q$. Without loss of generality the Tanner graph of $\mathcal{C}$ is assumed to be connected. If not, we can puncture the code and take a subset of code symbols and form a code $\mathcal{C'}$ with rate $\geq$ rate of $\mathcal{C}$ such that $\mathcal{C'}$ is also an $(n',k',r,3)_{sa}$ code with a connected Tanner graph for some $n',k'$ with $\frac{k}{n} \leq \frac{k'}{n'}$ and we can apply the following theorem on $\mathcal{C'}$ which has connected tanner graph.
\begin{thm}
	The code $\mathcal{C}$ which is an $(n,k,r,3)_{sa}$ code over the field $\mathbb{F}_q$ having connected Tanner graph, has rate upper bounded by the following expression:
	\bea
	\frac{k}{n} &\leq&  1-\frac{3(1+L_1+L_2)}{(r+1)(3+L_1+2L_2)}, \label{G115} 
	\eea
	\bean
	\text{where: } m=\frac{3n}{r+1}, \ \ L'_1 = \left \lceil \frac{(2r-1)m}{3(r+2)}-\frac{1}{r+2}-1 \right \rceil, \\
	L_2=\left \lfloor \frac{m-3-L'_1}{2} \right \rfloor,  L_1= m-3-2L_2.
	\eean
\end{thm}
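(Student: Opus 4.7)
The plan is to recast the inequality as a rank lower bound and establish it by a greedy construction. Since $\frac{k}{n} = 1 - \frac{\mathrm{rank}(H_{des}(\mathcal{C}))}{n}$, and since the identities $m(r+1)=3n$ (strict availability) and $3+L_1+2L_2=m$ (built into the definitions of $L_2$ and $L_1$) together imply
\[
1-\frac{1+L_1+L_2}{n}\;=\;1-\frac{3(1+L_1+L_2)}{(r+1)(3+L_1+2L_2)},
\]
the claim \eqref{G115} is equivalent to $\mathrm{rank}(H_{des}(\mathcal{C}))\geq 1+L_1+L_2$. Hence the whole content of the proof is to exhibit $1+L_1+L_2$ linearly independent rows of $H_{des}(\mathcal{C})$.

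I would produce such a family greedily. Initialize with any row of $H_{des}(\mathcal{C})$, and at each subsequent step append a row of $H_{des}(\mathcal{C})$ that has at least one coordinate not covered by any previously chosen row. The appended row then possesses a nonzero entry at a position where all earlier rows vanish, so the collection stays linearly independent after every iteration. The process halts only when every unpicked row has its support contained in the currently covered coordinate set. The column-weight-$3$ property forces termination to coincide with full coverage of all $n$ columns: if some column were still uncovered, its three incident rows would all be unpicked, yet the halting condition requires each of them to have its entire support in the covered set, contradicting the uncovered column lying in that support. Connectedness of the Tanner graph is what ensures that the greedy keeps extending rather than stalling in an isolated subset of rows.

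The main obstacle is the counting step that turns ``termination with $s$ picked rows and all $n$ columns covered'' into the numerical bound $s\geq 1+L_1+L_2$. For $j\in\{1,2,3\}$, let $u_j$ count the columns that are covered by exactly $j$ picked rows at termination; standard incidence counting then gives
\begin{equation*}
u_1+u_2+u_3=n,\qquad u_1+2u_2+3u_3=s(r+1),\qquad 2u_1+u_2=(m-s)(r+1).
\end{equation*}
Combining these identities with the strict-availability intersection constraint (any two rows share at most one column), via a double count of pairs (picked row, unpicked row) sharing a coordinate, I anticipate an inequality of the shape $s\geq \frac{(2r-1)m}{3(r+2)}-\frac{1}{r+2}-1$. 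Applying the ceiling operation gives $L'_1$, and then the floor rounding $L_2=\lfloor(m-3-L'_1)/2\rfloor$ together with the complementary $L_1=m-3-2L_2$ repackages this estimate into $s\geq 1+L_1+L_2$ in a way compatible with $3+L_1+2L_2=m$. Using the column-weight and intersection constraints one at a time yields only a bound comparable to \eqref{TamoBargRate}; extracting the sharper threshold $L'_1$ requires handling them simultaneously, and this is where I expect the technical core of the argument to lie.
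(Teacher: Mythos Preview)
Your reduction to the rank bound $\mathrm{rank}(H)\ge 1+L_1+L_2$ is correct, but the row-picking greedy you propose cannot establish it. Take the standard partition-based $(n,k,r,3)_{sa}$ code with $n=(r+1)^2$, whose $m=3(r+1)$ rows split into three partitions $P_1,P_2,P_3$ of $[n]$ into $(r+1)$-blocks. If your greedy first selects the $r+1$ rows of $P_1$, every column is covered and the process halts with $s=m/3$; yet $1+L_1+L_2$ is roughly $2m/3$ here (for $r=3$, $m=12$ one gets $1+L_1+L_2=7>4=m/3$). Your double count $2u_1+2u_2\le s(m-s)$ is satisfied in this run (it reads $2n\le 2n^2/(r+1)^2$, i.e.\ $n\ge (r+1)^2$), so no sharpening of that inequality alone can force $s\ge 1+L_1+L_2$. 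The difficulty is not that you have left the ``technical core'' unfinished; it is that a greedy which stops once columns are covered simply does not see enough rows.

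The paper's greedy is structurally different: it selects \emph{coordinates} rather than rows, and continues until all $m$ rows have been swept into the set $P$. At step $i$ the chosen coordinate $\sigma_i$ contributes $g_i\in\{1,2\}$ new rows, and the rule deliberately prefers coordinates already lying in two collected rows so that $g_i=1$ whenever possible. One new row per step is tagged, and since $\sigma_j\notin\mathrm{supp}$ of any row first appearing at step $i>j$, the $|S|=1+L_1^{\mathrm{alg}}+L_2^{\mathrm{alg}}$ tagged rows are triangular on the columns $\sigma_1,\dots,\sigma_{|S|}$, hence independent. The substance of the proof is then a transition analysis on the sequence $(g_i,g_{i+1})$: the pairwise-intersection-$\le 1$ property forces every $(2,1)$ transition to be followed by a run of $(1,1)$ transitions, and bookkeeping of the column-weight profile of the partially assembled matrix yields $L_1^{\mathrm{alg}}\ge L_1'$. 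Your row-based process has no analogue of the $g_i$'s and no mechanism to force many one-row steps, which is why the bound is out of reach for it.
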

\begin{proof} We present and analyse a greedy algorithm. By definition, $\mathcal{C}$ is the null space of an $m \times n$ matrix $H$ which contains in its rows, all the orthogonal parities protecting all the symbols with $m(r+1)=3n$. Let $\text{Rows}(H)=$ the set of row vectors of the matrix $H$.\\\\ 
	\textbf{Greedy Algorithm:} \\
	\ben
	\item Let $S= \emptyset, P=\emptyset$. 
	\item Step 1: Pick an arbitrary number $\sigma_1$ from $[n]$ and set $S=\{\sigma_1\}$ and $P = \{\underline{c} \in \text{Rows}(H) : \sigma_1 \in \text{Support}(\underline{c}) \}$.
	\item Step $i$, $i \geq 2$: Choose a number $\sigma_i \in [n]-S$ such that
	\bean
	  \sigma_i = \text{argmax}_{\{j \in [n]-S \}} (|D_j| \times I(|D_j|\leq 2))
	 \eean
	where $D_j = \{ \underline{c} \in P : j \in \text{Support}(\underline{c}) \}$ and $I(|D_j|\leq 2)$ is an indicator function which is 1 if $|D_j|\leq 2$ and 0 otherwise. If there are multiple argument values attaining the maximum, choose one of them randomly and assign it to $\sigma_i$.
	Now $S=S \cup \{\sigma_i\}$ and $P = P \cup \{\underline{c} \in \text{Rows}(H)-P : \sigma_i \in \text{Support}(\underline{c}) \}$.
	\item \textbf{Pseudocode for the Greedy Algorithm:}\\
	Set $P=\emptyset$\\
	Set $S=\emptyset$ \\
	Set $i=1$\\
	while $|P| <m$ \\ 
	$\indent \indent$ execute Step $i$ \\
	$\indent \indent$ $i=i+1$ \\
	end while.
	\item It is clear that, $k \leq n-|S|$ at the end of the algorithm. \\
	\een
	\textbf{Analysis of the Greedy Algorithm:}
	
	In the below arguments, wherever we write S it refers to the set S at the end of the greedy algorithm (i.e., when $|P|=m$). Let $g_i$ be the number of new codewords added to $P$ at step $i$. Since the Tanner graph of $\mathcal{C}$ is connected, $g_i \in \{1,2\}$ for $2 \leq i \leq |S|$. We define $g_{|S|+1}=0$ at the end of the algorithm. Let $P=\{\underline{c}_1,..,\underline{c}_f\}$ after the step $i$. Now define a partial parity check matrix $H^i_{par}=[\underline{c}_1^T,...,\underline{c}_f^T]^T$ (we refer to the codeword $\underline{c}_i$ as a row vector). Let $s^i_1,s^i_2,s^i_3$ be the number of weight $1,2,3$ columns in $H^i_{par}$ respectively. Let us introduce four collections of variables $I_i,\phi_i,J_i,\psi_i$ for each $1 \leq i \leq |S|-1$.\\
	Keeping in mind the properties of the greedy algorithm, we have the following recursive update for $i \geq 2$:
	
 	\bean
 	\text{if } g_{i+1}&=&2,g_{i+2}=2 \text{ then}\\
 	s^{i+1}_1 &=& s^i_1 + 2r-1 \\
 	s^{i+1}_2 &=& s^i_2 + 0 \text{ \ \  ($\Leftarrow$ $s^i_2=0$ because $g_{i+1}=2$ and $s^{i+1}_2=0$ because $g_{i+2}=2$) } \\
 	s^{i+1}_3 &=& s^i_3 + 1 \\ \\
 	\text{if } g_{i+1}&=&1,g_{i+2}=2 \text{ then}\\
 	s^{i+1}_1 & =& s^i_1 - \phi_i + r+1 \\
 	s^{i+1}_2  &=& s^i_2 - \phi_i \text{ \ \  ($\Leftarrow$ $s^{i+1}_2 = 0$ because $g_{i+2}=2$) } \\
 	s^{i+1}_3 &=& s^i_3 + \phi_i \text{ \ \  ($\Leftarrow$ $\phi_i$ two weight columns gets converted to three weight columns)} \\
 	\text{for some } 0 & \leq & \phi_i \leq r+1 \text{ \ \  ($\Leftarrow$ column indices corresponding to $\phi_i$ three weight columns, added in step $i+1$, corresponds to } \\ 
 	& & \ \  \ \ \ \ \ \ \ \ \ \ \ \ \ \text{ a subset of the support of the codeword added to $P$ at step $i+1$ ($g_{i+1}=1$). Hence $\phi_i \leq r+1$)} \\ \\
 	\text{if } g_{i+1}&=&2,g_{i+2}=1 \text{ then}\\
 	s^{i+1}_1 &=& s^i_1 + 2r-1-2I_i \\
 	s^{i+1}_2 &=& s^i_2 + I_i  \text{ \ \ ($\Leftarrow$ $s^i_2=0$ because $g_{i+1}=2$ )} \\
 	s^{i+1}_3 &=& s^i_3 + 1  \\
 	\text{for some } 0 & < & I_i \leq 2r \\ \\ 
 	\text{if } g_{i+1}&=&1,g_{i+2}=1 \text{ then}\\
 	s^{i+1}_1 & =& s^i_1 + r+1 -J_i -2 \psi_i \\
 	s^{i+1}_2  &=& s^i_2-J_i+\psi_i  \text{ \ \  ($\Leftarrow$ $\psi_i$ new two weight columns gets added and} \\ 
 	\hspace{1cm}
 		& & \ \ \ \ \ \ \ \ \ \ \ \ \ \ \ \ \ \ \text{ $J_i$ already existing two weight columns gets converted to three weight columns)}\\
 	s^{i+1}_3 &=& s^i_3 + J_i \text{ \ \ \ \ \ \ \ \  ($\Leftarrow$ $J_i$ two weight columns gets converted to three weight columns)} \\
 	\text{for some } s^i_2 & \geq & J_i \geq 1, r+1 \geq J_i+\psi_i \geq 0 \\ \\
 	\text{if } g_{i+1}&=&1,g_{i+2}=0 \text{ then}\\
 	s^{i+1}_1 & =& s^i_1 + 0 \\
 	s^{i+1}_2  &=& s^i_2 - (r+1) \text{\ \ \ ($\Leftarrow$ final step )}  \\
 	s^{i+1}_3 &=& s^i_3 + r+1
 	\eean
 	Writing the first two steps of the update explicitly: 
 	\bean
 	s^{1}_1 & =& 3r \\
 	s^{1}_2  &=& 0 \\
 	s^{1}_3 &=& 1
 	\eean
 	\bean
 	s^{2}_1 &=& 3r+2r+2-3-2 \gamma_1 \\
 	s^{2}_2  &=& \gamma_1 \\
 	s^{2}_3 &=& 2 \\
 	\text{for some } 0 & \leq & \gamma_1 \leq 4
 	\eean
 	
 		Let $l_{kj}=|\{i+1 : g_i=k,g_{i+1}=j,|S| \geq i \geq 3\}|$ and $S_{kj}=\{i : g_{i+1}=k,g_{i+2}=j,|S|-1 \geq i \geq 2\}$ at the end of the algorithm. We set $I_i=0, \forall i \notin S_{21}$. We set $\phi_i=0, \forall i \notin S_{12}$. We set $J_i=0, \forall i \notin S_{11}$. We set $\psi_i=0, \forall i \notin S_{11}$. We note that $l_{kj}=|S_{kj}|$\\
 	Now using the Global constriants ($s^{|S|}_1=0,s^{|S|}_2=0,s^{|S|}_3=n$ at the end of the algorithm i.e., after the final step all the columns in the partial parity check matrix must have weight 3 (Hence partial parity check matrix after the final step is the full matrix H with rows permuted.)):\\
 	
 	\bea
 	\hspace{-5cm}
 	s_3 &=& 2+l_{22} + l_{21} + \sum_{i \in S_{11}} J_i + \sum_{i \in S_{12}} \phi_i + r+1 =n \label{G3} \\
 	s_2 &=& \gamma_1 - \sum_{i \in S_{12}} \phi_i + \sum_{i \in S_{21}} I_i -\sum_{i \in S_{11}} J_i + \sum_{i \in S_{11}} \psi_i - (r+1) = 0 \label{G2} 
 	\eea
 	\bea
 	s_1 &=& 5r-1-2\gamma_1 + l_{22} (2r-1) + l_{21} (2r-1) -\notag\\
 	& &  2 \sum_{i \in S_{21}} I_i + l_{11} (r+1) - \sum_{i \in S_{11}} J_i -2 \sum_{i \in S_{11}} \psi_i + \notag \\
 	& & l_{12} (r+1) - \sum_{i \in S_{12}} \phi_i = 0 \label{G1}
 	\eea
 	\bea
 	l_{11} &\geq& \sum_{\{i: \ \ g_{i+1}=2,g_{i+2}=1, \ \ 2 \leq i \leq |S|-1 \}} \frac{I_i-2}{2} \label{G4} \\
 	l_{21} -1 &\leq & l_{12} \leq l_{21} \label{G5} \\
 	m &=& \frac{3n}{r+1} =   5+g_3+2(l_{22} + l_{12})+l_{21} + l_{11} \label{G6}
 	\eea
			Equation \eqref{G6} is true because we are counting the second index of $l_{kj}$ for codewords in the dual added to $P$. This is the reason we have $i+1$ in the definition of $l_{kj}$ instead of $i$ (putting $i$ in definition of $l_{kj}$ would have given the same value of $l_{kj}$). \\
			Inequality \eqref{G4} is true because whenever $g_{i+1}=2,g_{i+2}=1, i \geq 2$, it is followed by at least $z=\frac{I_i-2}{2}$, (1,1) transitions i.e., $g_{i+2+j}=1,g_{i+2+j+1}=1, \forall 0 \leq j \leq z-1$. This is because at least $2(z+1)$ new 2 weight columns appears in the partial parity check matrix $H^{i+1}_{par}$ compared to $H^{i}_{par}$. Among the coordinates (column indices) corresponding to these new $2(z+1)$ two weight columns at least a subset of size $z+1$ corresponds to a subset of support of one of the new codeword added in step $i+1$ to the set $P$ and hence these $z+1$ new 2 weight columns can become 3 weight columns only one by one at each of the following steps as the intersection between support of any two codewords in $H$ is atmsot one (since there is 2 weight column in each of these steps, we will be adding only one codeword to $P$ in each of these steps).
			Let,
			\bean
			\sum_{i \in S_{11}} J_i = \sum_{\{i:  \ \ g_{i+1}=1,g_{i+2}=1, \ \  2 \leq i \leq |S|-1 \}} J_i & =& l_{11} \frac{\sum_{\{i:\ \ g_{i+1}=1,g_{i+2}=1, \ \  2 \leq i \leq |S|-1  \}} J_i}{l_{11}} = l_{11} J \\
			\sum_{i \in S_{11}} \psi_i = \sum_{\{i: \ \ g_{i+1}=1,g_{i+2}=1, \ \  2 \leq i \leq |S|-1 \}} \psi_i &= & l_{11} \frac{\sum_{\{i:\ \ g_{i+1}=1,g_{i+2}=1, \ \  2 \leq i \leq |S|-1 \}} \psi_i}{l_{11}} = l_{11} \psi \\	
				\sum_{i \in S_{21}} I_i = \sum_{\{i: \ \ g_{i+1}=2,g_{i+2}=1, \ \  2 \leq i \leq |S|-1 \}} I_i & =& l_{21} \frac{\sum_{\{i: \ \ g_{i+1}=2,g_{i+2}=1,  \ \ 2 \leq i \leq |S|-1 \}} I_i}{l_{21}} = l_{21} I \\				
					\sum_{i \in S_{12}} \phi_i = \sum_{\{i: \ \  g_{i+1}=1,g_{i+2}=2, \ \  2 \leq i \leq |S|-1 \}} \phi_i & =& l_{12} \frac{\sum_{\{i: \ \ g_{i+1}=1,g_{i+2}=2, \ \  2 \leq i \leq |S|-1 \}} \phi_i}{l_{12}} = l_{12} \phi \\
						\text{Now } 0 & \leq & J+\psi \leq r+1 \\	
						0 & \leq & \phi \leq r+1 \\	
						0 & < & I \leq 2r.
			\eean
			\bea
						\text{Now \eqref{G4}} \text{ becomes: } \notag \\
						2 l_{11}+2l_{21}  &\geq& l_{21} I \label{G12}
			\eea
				\ben
				\item Manipulating the equations and inequalitie given above: \\
				
				 \bea
				  \text{By equation \eqref{G2}} &:& \notag\\
				  l_{12} \phi + l_{11} J + (r+1) - \gamma_1 & = & l_{21} I + l_{11} \psi \label{New_6} \\
				  \text{Substituting } r+1 & \geq & J+\psi \text{ in } \eqref{New_6}:  \notag \\
				  l_{12} \phi + l_{11} (r+1-\psi) + (r+1) - \gamma_1 & \geq & l_{21} I + l_{11} \psi  \notag \\
				  \frac{l_{12} \phi - l_{21} I + l_{11} (r+1) + (r+1) - \gamma_1}{2} & \geq & l_{11} \psi \label{G_11}
				 \eea

				 \bea
			      \text{Substituting \eqref{G3} in \eqref{G2}} &:& \notag \\
				 0 & =& \gamma_1 + l_{21} I + l_{11} \psi - (r+1) + 2 + l_{22} + l_{21} + r+1 -n \notag \\
				 l_{22} & =& n-\gamma_1-2-l_{21}-l_{21} I - l_{11} \psi  \label{New_1} \\
				 \text{Substituting  \eqref{G_11} in equation \eqref{New_1} } &:& \notag \\
				 l_{22} & \geq & n-\frac{\gamma_1}{2}-2-l_{21}-\frac{l_{21} I}{2}-\frac{l_{12} \phi}{2} \notag \\
				 & & - \frac{l_{11} (r+1)}{2} - \frac{(r+1)}{2} \label{New_2} \\
				 \text{Substituting \eqref{G12} in inequality \eqref{New_2}} &:& \notag \\
				 l_{22} & \geq & n-\frac{\gamma_1}{2}- \frac{(r+1)}{2}- 2-l_{21}-l_{11}-l_{21}- \notag \\ 
				 & & \frac{l_{12} \phi}{2} -\frac{l_{11} (r+1)}{2}  \label{New_3} \\
				 \text{Rewriting \eqref{G6} and substituting the inequality \eqref{New_3} in \eqref{G6}} &:& \notag\\
				 m = \frac{3n}{r+1} &=&   5+g_3+2(l_{22} + l_{12})+l_{21} + l_{11} \notag \\
				 m & \geq & 5+g_3+ 2n-\gamma_1-(r+1)- 4-2 l_{21}-2l_{11}-2l_{21}-l_{12} \phi \notag  \\
				 & &  - l_{11} (r+1) + 2l_{12}+l_{21} + l_{11} \notag  \\ 				 
				 0 & \geq & 5+g_3+ 2n-m-\gamma_1-(r+1)- 4-3 l_{21} \notag  \\
				 & & -l_{12} (\phi-2)- l_{11} (r+2) \label{New_4} 
				 \eea
				  Now first substituting $\phi \leq r+1$ and then substituting $l_{12} \leq l_{21}$ in the inequality \eqref{New_4} :
				  \bea
				 0 & \geq & 5+g_3+ 2n-m-\gamma_1-(r+1)- 4 \notag \\
				 & & -l_{21} (r+2)- l_{11} (r+2) \label{New_5} 
				 \eea
				 \bea
                 l_{21} + l_{11} & \geq & \frac{5+g_3+ 2n-m-\gamma_1-(r+1)- 4}{r+2} \notag \\
                 l_{21} + l_{11} & \geq & \frac{(2r-1)m}{3(r+2)}+\frac{5+g_3-\gamma_1-(r+1)- 4}{r+2} \label{New_7} \\
                  \text {Substituting $g_3\geq 1$, $\gamma_1 \leq 4$ in \eqref{New_7}} &:& \notag \\
                  l_{21} + l_{11} & \geq & \frac{(2r-1)m}{3(r+2)}-\frac{(r+3)}{r+2} \notag
				 \eea
				
				\bea
				\text{which gives } & & \notag\\
				l_{11} + l_{21} & \geq & \frac{(2r-1)m}{3(r+2)}-\frac{1}{r+2}-1 \label{G14}
				\eea
				
				\item  For $j = 1,2$, let $L_j=|\{i : g_i = j, |S| \geq i \geq 1\}|$ at the end of the algorithm. Using $|S|=L_1+L_2+1$ and $m=L_1+2L_2+3$ we get:\\
				\bea
				\frac{k}{n} &\leq& 1-\frac{|S|}{n} = 1-\frac{1+L_1+L_2}{n} \notag \\
				\frac{k}{n} &\leq& 1-\frac{m(1+L_1+L_2)}{nm} = 1-\frac{3(1+L_1+L_2)}{(r+1)(3+L_1+2L_2)} \label{G15} 
				\eea
				\item Now using \eqref{G14} we get :\\
				\bea
				L_1 \geq l_{11}+l_{21}  \geq \left \lceil \frac{(2r-1)m}{3(r+2)}-\frac{1}{r+2}-1 \right \rceil \label{New_8}
				\eea
				Using the inequality \eqref{New_8} on $L_1$ and $m=L_1+2L_2+3$, we have:
				\bea
				L'_1 = \left \lceil \frac{(2r-1)m}{3(r+2)}-\frac{1}{r+2}-1 \right \rceil \notag \\
				L_2 \leq \left \lfloor \frac{m-3-L'_1}{2} \right \rfloor \label{G116} \\
				L_1 \geq m-3-2 \left \lfloor\frac{m-3-L'_1}{2} \right \rfloor. \label{G117}
				\eea
				Substituting the bounds \eqref{G116},\eqref{G117} in \eqref{G15} we get the bound given in the theorem.
				\een
	
\end{proof}

Fig.~\ref{Rate_3} shows the plot of the new rate bound \eqref{G115} along with the rate bounds given in \cite{TamBarFro} ($\frac{k}{n} \leq \frac{1}{\prod_{j=1}^{3}(1+\frac{1}{jr})}$) and \cite{SonYue_3_Erasure} ($\frac{k}{n} \leq \frac{r^2}{(r+1)^2}$) for $t=3,n={r+3 \choose 3}$. It can be seen that the new bound given in \eqref{G115} is tighter, although the tightness is in the restricted setting of strict availability. Even if we plot for different $n$, the new bound remains tighter barring some small values of $r$ as low as $r \leq 3$. The fact that the new rate bound depends on $n$ is by itself interesting and such bounds might throw insight on optimal value of $n$ for $(n,k,r,3)_{sa}$ codes with connected tanner graph. We note that the achievability curve plotted corresponds to a construction taken from \cite{WanZhaLiu_Arb_Locality} which corresponds to a code having strict availability.

\begin{figure}[h!]
	\centering
	\includegraphics[width=5in]{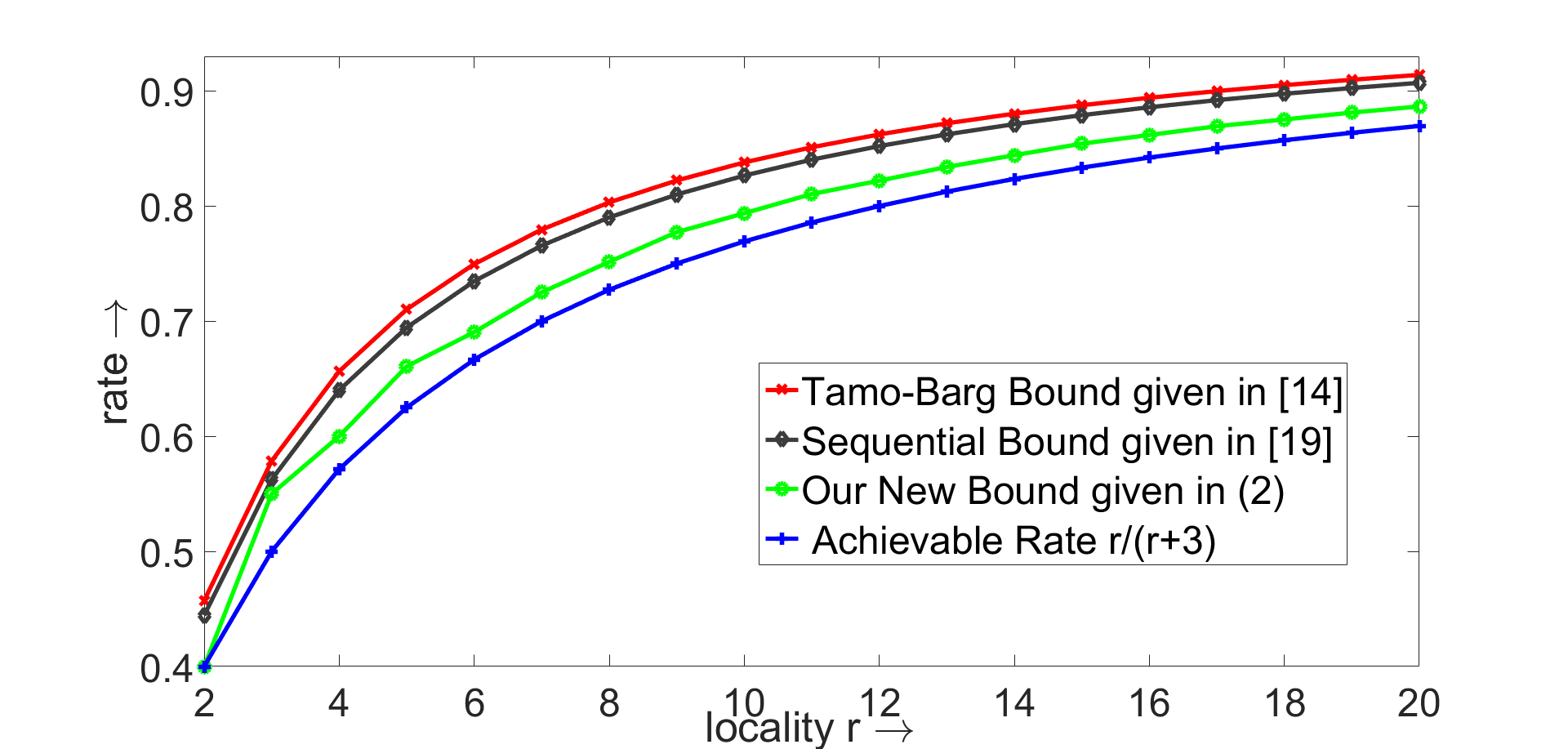}
	\caption[Example plot of locality vs rate]{Plotting locality vs rate for $t=3,n={r+3 \choose 3}$. Here we compare the bound \eqref{G115} with the bounds given in \cite{TamBarFro} and \cite{SonYue_3_Erasure} and the achievable rate $\frac{r}{r+3}$ given by the construction in \cite{WanZhaLiu_Arb_Locality}. Note that our bound \eqref{G115} is tighter than the bound given in \cite{TamBarFro} and \cite{SonYue_3_Erasure}.}
	\label{Rate_3}
\end{figure}

\subsection{A simple rate bound for an $(n,k,r,t)_{sa}$ code:}\label{SubSecTranspose}
Here we derive a bound on rate of an $(n,k,r,t)_{sa}$ code using a very simple transpose trick.
\begin{thm}
	 
	 \bea
	   \text{Let }R(r,t) & = &  sup_{\{(k,n) : (n,k,r,t)_{sa} \text{ code exists over some field $\mathbb{F}_q$} \}} \frac{k}{n}. \notag \\
	 \text {Then, }R(r,t) &=& 1-\frac{t}{r+1}+\frac{t}{r+1} R(t-1,r+1)) \label{rate_11} \\
	 R(r,t) &\leq& 1-\frac{t}{r+1}+\frac{t}{r+1} \frac{1}{\prod_{j=1}^{r+1}(1+\frac{1}{j(t-1)})} \label{Rate_2}
	 \eea
	 
\end{thm}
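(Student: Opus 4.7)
The plan is to exploit a simple duality at the level of the parity-check matrix. By definition, an $(n,k,r,t)_{sa}$ code is the null space of an $m \times n$ matrix $H$ whose rows all have weight $r+1$, whose columns all have weight $t$, and whose rows pairwise intersect in at most one coordinate. The first step is to observe that the transpose $H^T$ inherits precisely the analogous structure with the roles of $r+1$ and $t$ interchanged: its rows have weight $t$, its columns have weight $r+1$, and any two rows of $H^T$ intersect in at most one coordinate. (If columns $i$ and $j$ of $H$ were both nonzero in two common rows, those two rows of $H$ would intersect in at least the two coordinates $\{i,j\}$, contradicting the hypothesis.) Hence $H^T$ is the parity-check matrix of an $(m, k', t-1, r+1)_{sa}$ code with $m = nt/(r+1)$ and $k' = m - \mathrm{rank}(H^T) = m - (n-k)$.

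Next, I would convert this into a linear relation between rates. A direct computation gives
\[
\frac{k'}{m} \;=\; 1 - \frac{(n-k)(r+1)}{nt} \;=\; 1 - \frac{r+1}{t} + \frac{r+1}{t}\cdot\frac{k}{n}.
\]
Since transposition is an involution, it sets up a bijection between parity-check matrices of $(n,k,r,t)_{sa}$ codes (over all admissible $n,k$) and parity-check matrices of $(m,k',t-1,r+1)_{sa}$ codes, and under this bijection the rates are related by the linear, monotone-increasing formula above. Taking the supremum on both sides therefore yields the identity
\[
R(t-1,r+1) \;=\; 1 - \frac{r+1}{t} + \frac{r+1}{t}\,R(r,t),
\]
which I would solve algebraically for $R(r,t)$ to obtain \eqref{rate_11}.

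For the explicit bound \eqref{Rate_2}, the last step is to apply the Tamo--Barg bound \eqref{TamoBargRate} to $R(t-1, r+1)$ --- which is valid since any strict availability code is in particular an availability code with the same locality and availability parameters --- and then substitute the resulting estimate into \eqref{rate_11}.

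The main subtlety is justifying carefully that the transpose gives a genuine bijection rather than a one-way embedding, so that \eqref{rate_11} is an equality rather than just an upper bound. This reduces to checking that every parity-check matrix $H'$ of an $(m, k', t-1, r+1)_{sa}$ code has a transpose $(H')^T$ satisfying the defining constraints of an $(n,k,r,t)_{sa}$ code, which is immediate from the symmetry of the three defining conditions (row weight, column weight, pairwise row-intersection bound) under the swap $r+1 \leftrightarrow t$ already verified in the first step.
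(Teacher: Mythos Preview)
Your proposal is correct and follows essentially the same transpose trick as the paper. The only cosmetic difference is that you package the argument as a single bijection $H \leftrightarrow H^T$ together with the observation that the affine map $x \mapsto 1 - \tfrac{r+1}{t} + \tfrac{r+1}{t}\,x$ is increasing (hence sup-preserving), whereas the paper establishes the two inequalities $R(r,t) \le 1-\tfrac{t}{r+1}+\tfrac{t}{r+1}R(t-1,r+1)$ and its reverse separately by first starting from a rate-optimal $(n,k,r,t)_{sa}$ code and then swapping the roles of $H$ and $H^T$. Both routes rest on exactly the same structural observation about $H^T$, and your final step (plugging the Tamo--Barg bound \eqref{TamoBargRate} into \eqref{rate_11}) matches the paper verbatim.
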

\begin{proof}
Let $R^{(n,q)}(r,t)$ be the maximum achievable rate of a code with strict $t$ availability for fixed $n,r,t$ over the field $\mathbb{F}_q$. If $(n,k,r,t)_{sa}$ code doesn't exist for any $k$ for fixed $n,r,t,q$ then we define $R^{(n,q)}(r,t)=-\infty$. Let us choose $n,r,t,q$ such that $R^{(n,q)}(r,t)>0$. Let $\mathcal{C}$ be an $(n,k,r,t)_{sa}$ code over the field $\mathbb{F}_q$ with rate $R^{(n,q)}(r,t)$. By definition $\mathcal{C}$ is the null space of an $m \times n$ matrix $H$ with all columns having weight $t$ and all rows having weight $r+1$. This matrix $H$ contains all $t$ orthogonal parity checks protecting any given symbol. Now the null space of $H^T$ (transpose of $H$) corresponds to an $(m,k'',t-1,r+1)_{sa}$ code over the field $\mathbb{F}_q$. Hence we have the following inequality:
\bean
rank(H)=n(1-R^{(n,q)}(r,t)) \\
rank(H)=rank(H^T) \geq m(1-R^{(m,q)}(t-1,r+1)) \\
\eean
Hence we have
\bean
m(1-R^{(m,q)}(t-1,r+1)) & \leq & n(1-R^{(n,q)}(r,t)) \\
\text{Using } m(r+1)=nt : \\
\frac{t}{r+1}(1-R^{(m,q)}(t-1,r+1)) & \leq & (1-R^{(n,q)}(r,t)) \\
R^{(n,q)}(r,t) & \leq & 1-\frac{t}{r+1}+\frac{t}{r+1}R^{(m,q)} (t-1,r+1) \\
R^{(n,q)}(r,t) & \leq & 1-\frac{t}{r+1}+\frac{t}{r+1} (sup_{\{ m \geq 0, q=p^w : \text{$p$ is a prime and } w \in \mathbb{Z}_+ \}} R^{(m,q)}(t-1,r+1)) \\
sup_{\{ n \geq 0, q=p^w : \text{$p$ is a prime and } w \in \mathbb{Z}_+ \}} R^{(n,q)}(r,t) & \leq & 1-\frac{t}{r+1}+\frac{t}{r+1} (sup_{\{ m \geq 0, q=p^w : \text{$p$ is a prime and } w \in \mathbb{Z}_+ \}} R^{(m,q)}(t-1,r+1)) \\
R(r,t) & \leq & 1-\frac{t}{r+1}+\frac{t}{r+1} R(t-1,r+1)) 
\eean
Now swapping the roles of $H$ and $H^T$ in the above derivation i.e., we take an $(m,k',t-1,r+1)_{sa}$ code $\mathcal{C}$ over the field $\mathbb{F}_q$ with rate $R^{(m,q)}(t-1,r+1)$ and repeat the above argument in exactly the same way. By doing so we get :\\
\bean
R(r,t) \geq 1-\frac{t}{r+1}+\frac{t}{r+1} R(t-1,r+1) 
\eean

Hence we get:\\
\bea
R(r,t) = 1-\frac{t}{r+1}+\frac{t}{r+1} R(t-1,r+1) \label{rate_111}
\eea

Now substituting the rate bound  $R(t-1,r+1)) \leq \frac{1}{\prod_{j=1}^{r+1}(1+\frac{1}{j(t-1)})}$  given in \cite{TamBarFro} into \eqref{rate_111}, we get:

\bea
R(r,t) \leq 1-\frac{t}{r+1}+\frac{t}{r+1} \frac{1}{\prod_{j=1}^{r+1}(1+\frac{1}{j(t-1)})} 
\eea
\end{proof}
\begin{figure}[h!]
	\centering
	\includegraphics[width=5in]{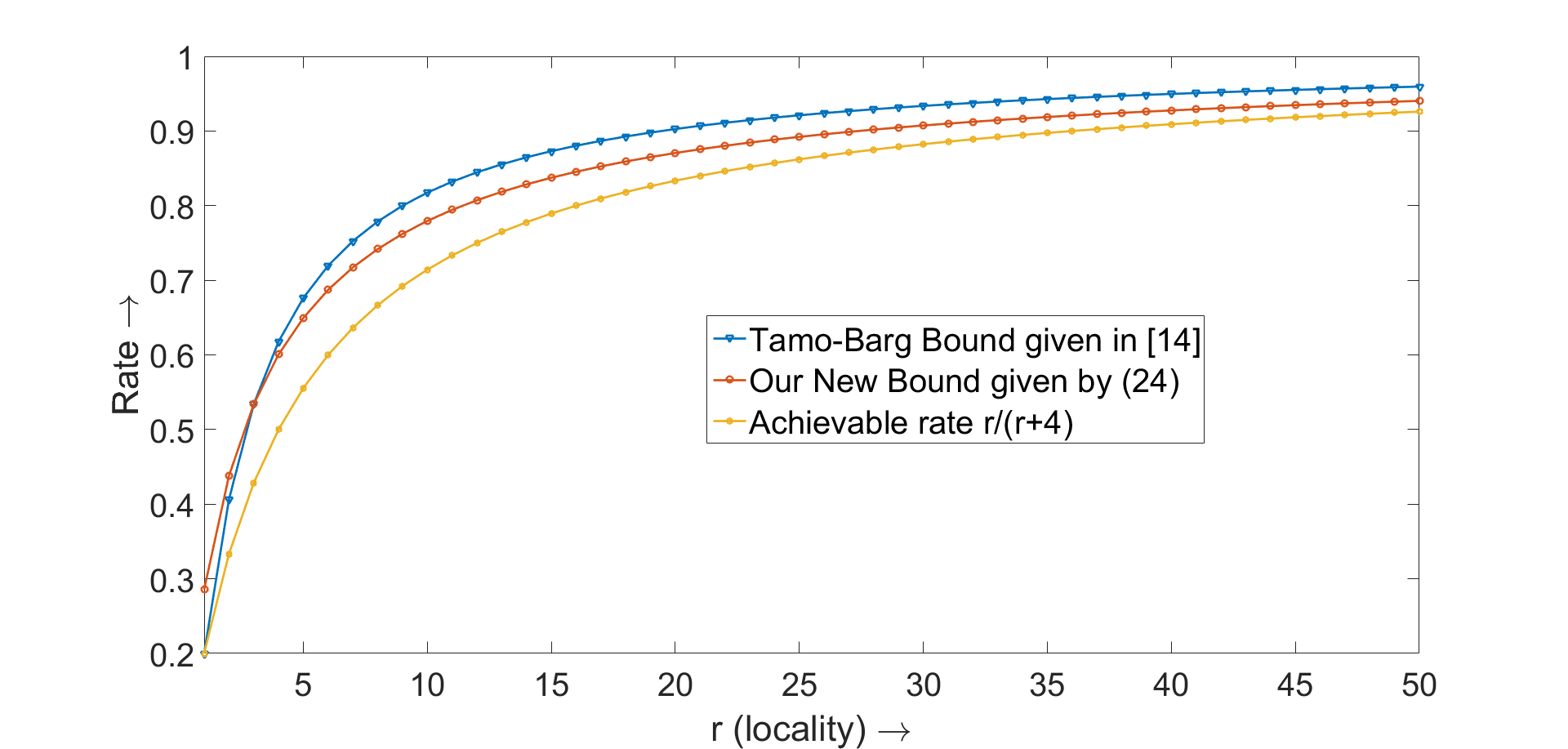}
	\caption[Example plot of locality vs rate]{Plotting locality vs rate for $t=4$. Here we compare the bound \eqref{Rate_2} with the bound given in \cite{TamBarFro} and the achievable rate $\frac{r}{r+4}$ given by the construction in \cite{WanZhaLiu_Arb_Locality}.  Note that our bound \eqref{Rate_2} is tighter than the bound given in \cite{TamBarFro}.}
	\label{Rate_4}
\end{figure}
\begin{note} $\bf{Tightness \ of \ the \ bound }$:\\
The bound on $R(r,t)$ given in \eqref{Rate_2} becomes tighter than the bound $R(r,t) \leq \frac{1}{\prod_{j=1}^{t}(1+\frac{1}{jr})}$ given in \cite{TamBarFro} as $r$ increases for a fixed $t$. As an example lets calculate $R(r,2)$. From \eqref{Rate_2}:

\bean
R(r,2) \leq 1-\frac{2}{r+1}+\frac{2}{r+1} \frac{1}{\prod_{j=1}^{r+1}(1+\frac{1}{j})} \\
R(r,2) \leq 1-\frac{2}{r+1}+\frac{2}{r+1} \frac{1}{r+2} \\
R(r,2) \leq \frac{r}{r+2}.
\eean
The above abound on $R(r,2)$ obtained from  \eqref{rate_11} and  \eqref{Rate_2} is a tight bound as it is known that rate $\frac{r}{r+2}$ is achievable for $t=2$ and any $r$ using a complete graph code (\cite{PraLalKum}) and hence clearly tighter than $R(r,2) \leq \frac{r^2}{(r+1)(r+\frac{1}{2})}$ given in \cite{TamBarFro}. 
We show a plot of the bound given in \eqref{Rate_2} for $t=4$ in  Fig.~\ref{Rate_4}. The plot shows that the bound given by \eqref{Rate_2} is tighter than the bound given in \cite{TamBarFro} for $t=4, r>2$. \\
Even though our bound becomes tighter as $r$ increases for a fixed $t$ than the bound in \cite{TamBarFro}, the bound given in \cite{TamBarFro} is for $(n,k,r,t)_{a}$ codes but the bound in \eqref{rate_11} and  \eqref{Rate_2} is applicable only for $(n,k,r,t)_{sa}$ codes. But we also would like to draw attention to the fact that most of the high rate constructions known in literature for $(n,k,r,t)_{a}$ codes are also $(n,k,r,t)_{sa}$ codes. In fact the most general high rate construction for $(n,k,r,t)_{a}$ is given in \cite{WanZhaLiu_Arb_Locality} and it is also an $(n,k,r,t)_{sa}$ code. Hence there is a very good reason to think that when it comes to rate-optimality $(n,k,r,t)_{sa}$ codes will give good candidate codes.
\end{note}
\subsection{A numerical bound on rate of an $(n,k,r,t)_{sa}$ code}
A technique to bound rate of an $(n,k,r,t)_a$ code with given minimum distance was proposed in \cite{HaoRecursive}. We follow a similar approach to give a numerical bound on rate of an $(n,k,r,t)_{sa}$ code.\\
 We have not defined information symbol availability code in this paper but for our purpose it suffices to note that bounds on dimension and minimum distance for $t$ information symbol availability code also hold for codes with $t$ availability.
A field-size dependent bound is given by Huang et al. in \cite{HuaYaaUchSie} for information symbol availability codes :
For any linear $[n,k,d]_q$ code with $t$ information symbol availability and hence for an $(n,k,r,t)_a$ code with minimum distance $d$, the dimension satisfies:
\begin{align}
k \leq \underset{\substack{{1 \leq x \leq \left \lceil\frac{k-1}{(r-1)t+1}\right \rceil,x \in \mathbb{Z}^+,} \\ {\mathbf{y} \in ([t])^x,} \\ {A(r,x,\mathbf{y}) < k}}}{\text{min}}\{A(r,x,\mathbf{y})+k_{l-opt}^{(q)}[n-B(r,x,\mathbf{y}),d]\}, \label{HuangSiegelRate}
\end{align} 	
where $x$ is a positive integer and $\mathbf{y} = (y_1,...,y_x) \in ([t])^x$ is a vector of $x$ positive integers, $k_{l-opt}^{(q)}[n,d]$ is the largest possible dimension of a linear code over $\mathbb{F}_q$ with block length $n$ and minimum distance $d$ and 
\begin{align}
A(r,x,\mathbf{y}) & = \sum_{j = 1}^{x}(r-1)y_j+x,\\
B(r,x,\mathbf{y}) & = \sum_{j = 1}^{x}ry_j+x.	
\end{align}
In this section, we provide a Linear Programming upper bound on the rate of a binary $(n,k,r,t)_{sa}$ (strict availability) code $\mathcal{C}$ with minimum distance $d_{\text{min}}(\mathcal{C})=d$ and compare the bound for a special case with the bounds Eq.\eqref{TamoBargRate} and Eq.\eqref{HuangSiegelRate}.\\
Let $\mathcal{C}$ be an $(n,k,r,t)_{sa}$ code over a field $\mathbb{F}_q$ and $\mathcal{C}^\perp$ be its dual code.  By defintion, $\mathcal{C}$ is the null space of an $m \times n$ matrix $H$ which contains all the orthogonal parities protecting all the symbols with $m(r+1)=nt$.  Let $A_i$ and $B_i$ be the number of codewords of weight $i$ in $\mathcal{C}$ and $\mathcal{C}^\perp$ respectively. Then due to linearity of the code and its dual we have
\begin{align}
\sum_{i = 0}^{n} A_i & = 1 + \sum_{i = d}^{n}A_i = |\mathcal{C}| = q^k,\\
\sum_{j = 0}^{n} B_j & = |\mathcal{C}^\perp| = q^{n-k}.
\end{align}
From MacWilliam's identity we have,
\begin{align}
B_j = \frac{1}{|\mathcal{C}|}\sum_{i = 0}^{n}A_i\mathcal{K}_j(i), \text{ }1 \leq j \leq n,
\end{align}
where $\mathcal{K}_j(i) = \sum_{a = 0}^{j}(-1)^a(q-1)^{j-a}\binom{i}{a}\binom{n-i}{j-a}$ are Krawtchouk polynomials.\\

For a strict availability code upon equating the sum of the row weights and sum of column weights of the matrix whose rows contain exactly all the codewords in the dual of weight equal to $r+1$ we have:
\begin{align}
(r+1)B_{r+1} \geq nt. \label{LP_Eq1}
\end{align}
For a code with strict $t$ availability, we make the following observation:

For a codesymbol $c_i$, $1 \leq i \leq n$ there are $t$ codewords $h_1^i,...,h_t^i$ in the rows of $H$ with support sets $S_1^{(i)},...,S_t^{(i)}$ $(S_h^{(i)} \cap S_g^{(i)} = \{i\}, \forall 1 \leq h \neq g \leq t)$ as described in the definition of a strict availability code. Now adding any pair of rows $h_x^i$ and $h_y^i$ out of the $\binom{t}{2}$ such combinations will result in dual codewords of weight exactly $2r$. If we do the same for every codesymbol, it can be seen that $n\binom{t}{2}$ distinct $2r$ weight codewords in the dual are generated if $r > 2$.  The fact that these $n\binom{t}{2}$, $2r$ weight codewords in the dual result in distinct codewords is due to the property of strict availability that the support sets of any two codewords in the rows of $H$ can intersect at at most one coordinate. The same property also implies that in the $(m \times n)$ matrix $H$ upon adding every pair of rows we get at most(exact for $r > 2$) $n\binom{t}{2}$ dual codewords of weight $2r$ and $\binom{m}{2} - n\binom{t}{2}$ dual codewords of weight $2(r+1)$.  Thus we have
\begin{align}
B_{2r} & \geq n\binom{t}{2}, \text{ for }r > 2, \label{(B2r)} \\
B_{2(r+1)} & \geq \binom{m}{2} - n\binom{t}{2} = \binom{\frac{nt}{r+1}}{2} - n\binom{t}{2}. \label{B2(r+1)}
\end{align}
We also have $d_{\text{min}}(\mathcal{C}) \geq t+1$ for any availability $t$ code. Upon simplification inequalities \eqref{(B2r)} and \eqref{B2(r+1)} take the form:
\begin{align}
\sum_{i = t+1}^{n}A_i\left(n\binom{t}{2}-\mathcal{K}_{2r}(i)\right) & \leq (q-1)^{2r}\binom{n}{2r}-n\binom{t}{2},  \text{ for }r > 2,\label{B2rsimplified}\\
\sum_{i = t+1}^{n}A_i\left(\binom{\frac{nt}{r+1}}{2}-n\binom{t}{2}-\mathcal{K}_{2(r+1)}(i)\right) & \leq (q-1)^{2(r+1)}\binom{n}{2(r+1)}-\binom{\frac{nt}{r+1}}{2}+n\binom{t}{2}.\label{B2(r+1)simplified}
\end{align}
Also, from the non-negativity constraints on $A_i$ and $B_i$,
\begin{align}
A_i & \geq 0, \text{ for } 1 \leq t+1 \leq n, \label{Ai}\\
\sum_{i = t+1}^{n}A_i\mathcal{K}_j(i) & \geq -(q-1)^j\binom{n}{j},\text{ for }0 \leq j \leq n.\label{Bj}
\end{align}
Now the problem of finding an upper bound on the dimension is one of maximizing the number of codewords in the code $\mathcal{C}$ subject to the mentioned constraints that are linear in $A_{t+1},...,A_n$. Thus we have the following linear program formulation:
\begin{align}
\text{maximize } & 1+\sum_{i = t+1}^{n}A_i \label{LP_11}\\
\text{s.t. } & \eqref{Ai},\eqref{Bj},\eqref{B2rsimplified},\eqref{B2(r+1)simplified},\eqref{LP_Eq1} \text{ hold} \label{LP_12}.
\end{align}
If $M$ is the maximum value of the objective function of the linear program described above then we have the following bound on the dimension of the code:
\begin{align}
k \leq \log_q(M)
\end{align}
\begin{figure}[ht]
	\centering
	\includegraphics[scale = 0.3]{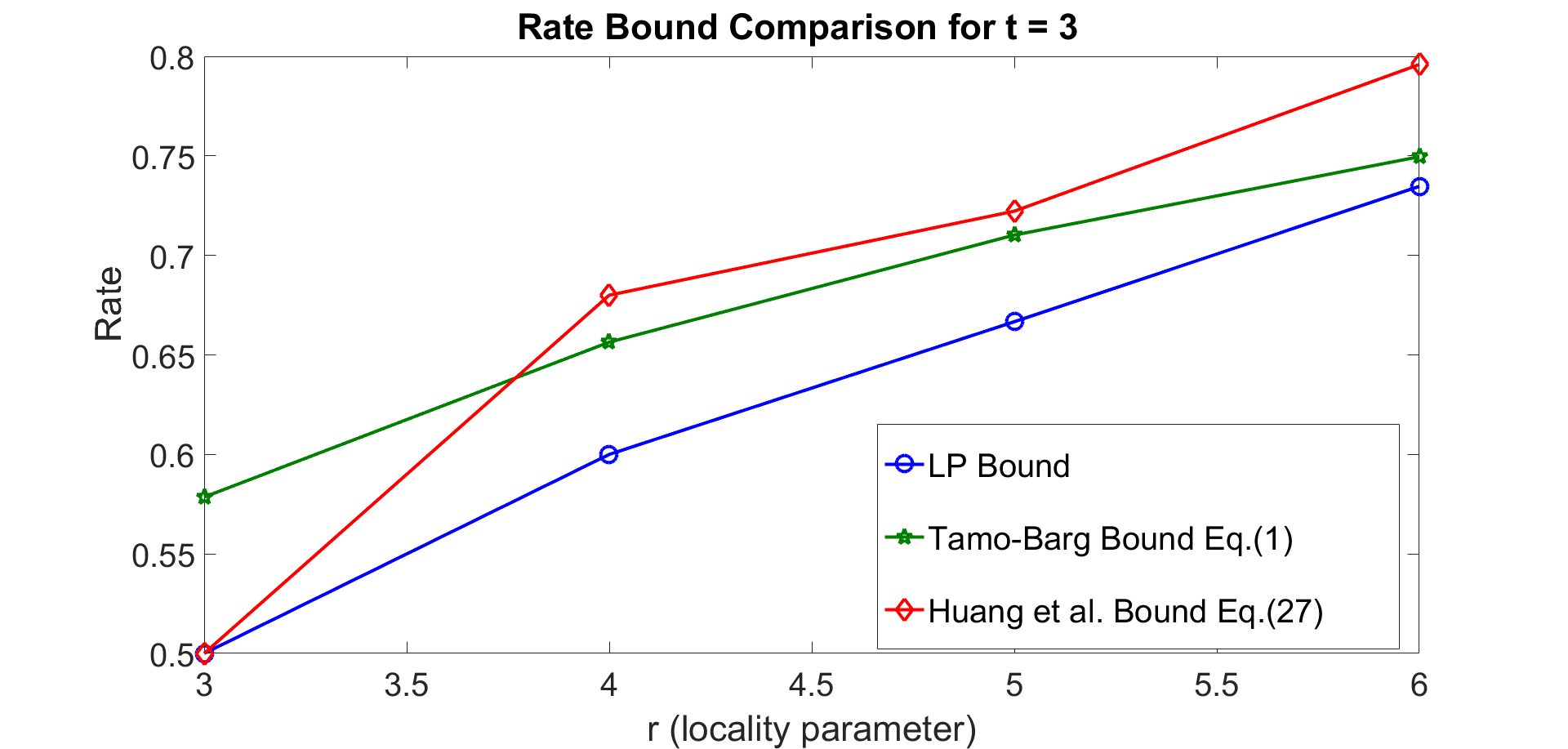}
	\caption{Comparison of the LP rate bound on binary strict availability codes with existing bounds}
	\label{fig:LPboundt3}
\end{figure}
The above formulation can be extended by adding further linear constraints obtained by linear combinations for 3 rows of the parity check matrix $H$ and so on. The above linear program \eqref{LP_11},\eqref{LP_12} was evaluated using MATLAB for $q = 2$, $n = (r+1)^2$ and compared with the bounds \eqref{TamoBargRate} and \eqref{HuangSiegelRate} and an improvement was observed for the special case of strict availability codes with $r > 2$. (see figure \ref{fig:LPboundt3})
\section{Bounds on minimum distance of codes with $t$ availability}\label{SecMinDistBounds}
In this section, we present field-size dependent and field-size independent bounds on minimum distance of $(n,k,r,t)_{a}$ codes.  Let $d^q_{\text{min}}(n,k,r,t)$ denote the maximum possible minimum distance of an $(n,k,r,t)_a$ code over the field $\mathbb{F}_q$. Let $d_{\text{min}}(n,k,r,t)$ denote the maximum possible minimum distance of an $(n,k,r,t)_a$ code independent of field size. Two field-size independent bounds are available in literature:
In \cite{TamBarFro} the following bound on minimum distance of an $(n,k,r,t)_{a}$ code was presented:
\bea
d_{\text{min}}(n,k,r,t) \leq n-\sum_{i=0}^{t} \left \lfloor \frac{k-1}{r^i} \right \rfloor \label{TamoBargDmin}
\eea
In \cite{SonYue_Square_Code} the following bound on minimum distance of a code with information symbol availability was presented:
\bea
d_{\text{min}}(n,k,r,t) \leq n-k+2 - \left \lceil \frac{t(k-1)+1}{t(r-1)+1} \right \rceil \label{WangDmin}
\eea
We compare our field size independent minimum distance bounds with the above two bounds.
\vspace{-0.2cm}
\subsection{Field-Size Dependent Bound on Minimum Distance of an $(n,k,r,t)_{a}$ Code}
Here we present field-size dependent bound on minimum distance of an $(n,k,r,t)_{a}$ code. We calculate it using Generalized Hamming Weights (GHW) of the dual of an $(n,k,r,t)_{a}$ code.
\begin{thm} \label{thm:dmin1}
	Let $\mathcal{C}$ be an $(n,k,r,t)_{a}$ code over a field $\mathbb{F}_q$ with minimum distance $d^q_{\text{min}}(n,k,r,t)$ then:
	\bea
	d^q_{\text{min}}(n,k,r,t) \leq  \min_{i \in S} \ d^q_{\text{min}}(n-e_i,k+i-e_i,r,t), \ \label{Mindist_Bound1}
	\eea
	where $S=\{ i: e_i-i < k, 1 \leq i \leq b \}$ and b is any integer such that $n-k \geq b \geq 1$ and $\{e_i: 1 \leq i \leq b\}$ are a set of numbers such that that $d_i^{\perp} \leq e_i, \forall 1 \leq i \leq b$ and $\{ d_i^{\perp} : 1 \leq i \leq n-k \}$ are the GHWs of $\mathcal{C}^{\perp}$.
\end{thm}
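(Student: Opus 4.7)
The plan is to use the $i$-th generalized Hamming weight of $\mathcal{C}^\perp$ to obtain a low-weight dual subcode, then shorten $\mathcal{C}$ on a coordinate set containing its support. The resulting code should inherit $(r,t)$-availability with a controlled dimension loss, yielding a recursive upper bound on $d^q_{\min}$ for each $i \in S$; taking the minimum over $i$ gives the claim.

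First, I would fix an index $i \in S$. The hypothesis $d_i^\perp \leq e_i$ guarantees an $i$-dimensional subcode $D \subseteq \mathcal{C}^\perp$ whose support $E$ satisfies $|E| = d_i^\perp$. I would pick any $F$ with $E \subseteq F \subseteq [n]$ and $|F| = e_i$, and then form $\mathcal{C}'$ by shortening $\mathcal{C}$ on $F$, i.e.\ taking the codewords of $\mathcal{C}$ that vanish on $F$ and deleting those coordinates. The resulting code has length $n - e_i$. Using the standard identity $\dim(\mathcal{C}|_F) = |F| - \dim\{h \in \mathcal{C}^\perp : \mathrm{supp}(h) \subseteq F\}$ together with $D \subseteq \{h \in \mathcal{C}^\perp : \mathrm{supp}(h) \subseteq F\}$, I would conclude $\dim \mathcal{C}' \geq k + i - e_i$, which is positive by the defining condition of $S$. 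Since shortening cannot decrease minimum distance, $d_{\min}(\mathcal{C}') \geq d^q_{\min}(n,k,r,t)$.

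Next, I would verify that $\mathcal{C}'$ is again an $(n',k',r,t)_a$ code. The dual of $\mathcal{C}'$ equals $\mathcal{C}^\perp$ punctured on $F$, so for each surviving coordinate $j \in F^c$, the $t$ orthogonal parity checks of weight $\leq r+1$ in $\mathcal{C}^\perp$ supporting $j$ restrict to codewords of $(\mathcal{C}')^\perp$ of weight $\leq r+1$ whose pairwise supports still intersect in exactly $\{j\}$; this works because each such pairwise intersection was already the single coordinate $\{j\} \subseteq F^c$, and puncturing $F$ cannot alter intersections lying in $F^c$. To trim the dimension to exactly $k + i - e_i$, I would pass to any subcode of $\mathcal{C}'$ of that dimension; since the dual of a subcode is larger, all the local orthogonal parity checks are still present, and the minimum distance only grows.

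The conclusion is then immediate: $d^q_{\min}(n,k,r,t) \leq d^q_{\min}(n - e_i, k + i - e_i, r, t)$ for every $i \in S$, and minimizing over $i$ yields \eqref{Mindist_Bound1}. The main obstacle I anticipate is the careful bookkeeping around the availability inheritance, especially confirming that the pairwise-intersection-equals-$\{j\}$ condition is preserved after puncturing the dual on $F$; everything else is a routine dimension count using duality between shortening and puncturing.
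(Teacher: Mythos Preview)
Your proposal is correct and follows essentially the same approach as the paper: pick an $i$-dimensional dual subcode of minimal support, enlarge its support to a set of size $e_i$, shorten $\mathcal{C}$ on that set, and verify the resulting code has length $n-e_i$, dimension at least $k+i-e_i$, minimum distance at least that of $\mathcal{C}$, and retains $(r,t)$-availability. Your write-up is in fact slightly more careful than the paper's --- you make the dimension count explicit via the duality identity, you spell out why the punctured dual still furnishes the $t$ orthogonal local checks at each surviving coordinate, and you handle the case $\dim\mathcal{C}'>k+i-e_i$ by passing to a subcode (the paper leaves this implicit).
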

\begin{proof}
	Let $1 \leq i \leq b$. Let $S'=\{s_1,...,s_{d_i^{\perp}} \}$ be the co-ordinates corresponding to the support of an $i$ dimensional subspace whose support has cardinality exactly $d_i^{\perp}$ in $\mathcal{C}^{\perp}$. Add $e_i-d_i^{\perp}$ arbitrary extra co-ordiantes to $S'$ and let the resulting set be $S$. Now shorten the code $\mathcal{C}$ in the co-ordinates given by $S$ i.e., take $\mathcal{C}^S=\{c|_{S^c} : c \in \mathcal{C}, c|_S = 0\}$ . $c|_{A}$ refers to the code symbols in the codeword $c$ corresponding to the co-ordinates in the set $A$ .The resulting code $\mathcal{C}^S$ has block length $n-e_i$ and dimension $\geq n-e_i-(n-k-i)=k+i-e_i$ and minimum distance $\geq d^q_{min}(n,k,r,t)$ (if $k+i-e_i > 0$) and locality $r$ and availability $t$. Hence:
	\bean
	d^q_{\text{min}}(n,k,r,t) \leq  \min_{i \in S} \ d^q_{\text{min}}(n-e_i,k+i-e_i,r,t).
	\eean 
\end{proof}
\vspace{-0.35cm}
Following is an example for calculating $e_i$.
By using the expression for $e_i$ given in the following example, we will get a tighter bound on minimum distance of a code with $t$ availability over $F_q$.
\begin{example} \label{Recur1}
	
	We can calculate $e_i$ using the recursion given in \cite{PraLalKum} (given below) for calculating upper bounds on GHWs with slight modification and using the rate bound given in \cite{TamBarFro} for $t>3$ and rate bound given in \cite{SonYue_3_Erasure} for $t=3$ and  rate bound given in \cite{PraLalKum} for $t=2$:	
	\bea
	R'(r,t)&=&\begin{cases}
		\frac{r}{r+2} & \text{ if } t=2\\
		\frac{r^2}{(r+1)^2} & \text{ if } t=3 \\
		\frac{1}{\prod_{j=1}^{t}(1+\frac{1}{jr})} & \text{ if } t>3	,				
	\end{cases} \label{Rate_Bound} \\
	b  & = & \left \lceil n(1-R'(r,t)) \right \rceil, \label{paritychecks1} \\
	e_b & = & n, \label{e1}
	\eea \vspace{-0.6cm}
	\bea
	\text{For } b \geq i \geq 2, \ \ e_{i-1} = \min(e_i, e_i-\left \lceil \frac{2e_i}{i} \right \rceil + r+1). \label{e2}
	\eea	
	The expression for $e_i$ given in \eqref{paritychecks1},\eqref{e1}, \eqref{e2} for calculating an upper bound on $d_i^{\perp}$ ($i^{th}$ GHW of $\mathcal{C}^{\perp}$) of an $(n,k,r,t)_a$ code $\mathcal{C}$ can be obtained by taking the matrix $H_{des}(\mathcal{C})$ and taking a set of $b$ (Eq. \eqref{paritychecks1}) linearly independent rows in $H_{des}(\mathcal{C})$ and forming a matrix $H'_{des}(\mathcal{C})$ with these $b$ linearly independent rows and applying Lemma 5.4 in \cite{PraLalKum} to the support sets of the rows of $H'_{des}(\mathcal{C})$. Note that the upper bound $e_i$ \eqref{paritychecks1},\eqref{e1}, \eqref{e2} continues to apply even if some rows of $H'_{des}(\mathcal{C})$ have weight $< r+1$. One can always take a set of $b$ (Eq. \eqref{paritychecks1}) linearly independent rows in $H_{des}(\mathcal{C})$ because the equation given for $b$ (Eq. \eqref{paritychecks1}) is the minimum possible number of linearly independent rows (codewords) in $H_{des}(\mathcal{C})$. When we apply Lemma 5.4 in \cite{PraLalKum} as mentioned above it only gives the recursion without the min in \eqref{e2} but one can see the min can be put as in \eqref{e2} by a simple observation in the proof of Lemma 5.4 in \cite{PraLalKum}.
\end{example}
\begin{cor} \label{Cor1}
	Let $\mathcal{C}$ be an $(n,k,r,t)_{a}$ code over a field $\mathbb{F}_q$ with minimum distance $d_{min}(n,k,r,t)$. Then field-size dependency on the bound \eqref{Mindist_Bound1} can be removed and written as:
	\bea
	& d_{\text{min}}(n,k,r,t) \leq  \min_{i \in S}  \  d_{\text{min}}(n-e_i,k+i-e_i,r,t) \notag \\
	&  \leq \min_{i \in S}  \  \ \   n-k-i+1-\sum_{j=1}^{t} \left \lfloor \frac{k+i-e_i-1}{r^j} \right \rfloor \ \   \ \  \ \label{Mindist_Bound2}
	\eea
	where $S= \{ i: e_i-i < k, 1 \leq i \leq b \}$ and $e_i,1 \leq i \leq b $ are calculated using example \ref{Recur1} using \eqref{paritychecks1},\eqref{e1},\eqref{e2}.\\
\end{cor}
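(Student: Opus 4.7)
The plan is to derive this corollary by bolting together three ingredients: Theorem~\ref{thm:dmin1}, the field-independent recursion for $e_i$ from Example~\ref{Recur1}, and the Tamo--Barg minimum-distance bound \eqref{TamoBargDmin} applied to the shortened code produced inside the proof of Theorem~\ref{thm:dmin1}.

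First I would observe that the $e_i$ produced by the recursion \eqref{paritychecks1}--\eqref{e2} are upper bounds on $d_i^{\perp}$ that depend only on $n,k,r,t$, not on the underlying field $\mathbb{F}_q$. Indeed, those $e_i$ are obtained by selecting $b$ linearly independent rows of $H_{\text{des}}(\mathcal{C})$ and invoking Lemma~5.4 of \cite{PraLalKum} on their support sets; together with the purely combinatorial rate bounds \eqref{Rate_Bound}, the whole derivation is field-oblivious. Consequently, for every prime power $q$ and every $(n,k,r,t)_a$ code over $\mathbb{F}_q$, the same sequence $\{e_i\}$ satisfies the hypothesis of Theorem~\ref{thm:dmin1}, yielding
\begin{equation*}
d^q_{\text{min}}(n,k,r,t) \;\leq\; \min_{i \in S}\, d^q_{\text{min}}(n-e_i,\, k+i-e_i,\, r,\, t).
\end{equation*}
Since trivially $d^q_{\text{min}}(n',k',r,t) \leq d_{\text{min}}(n',k',r,t)$ for every $q$, taking the supremum of the left-hand side over all prime powers $q$ gives the first inequality of \eqref{Mindist_Bound2}.

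For the second inequality I would apply the Tamo--Barg bound \eqref{TamoBargDmin} directly to the parameters $(n-e_i,\, k+i-e_i,\, r,\, t)$: peeling off the $j=0$ term $\lfloor (k+i-e_i-1)/r^0 \rfloor = k+i-e_i-1$ from the sum converts the resulting expression into $n-k-i+1-\sum_{j=1}^{t}\lfloor (k+i-e_i-1)/r^j\rfloor$, which is precisely the claimed form. The definition $S = \{i : e_i - i < k\}$ guarantees $k+i-e_i \geq 1$, so the shortened code has positive dimension and the Tamo--Barg bound is legitimately applicable.

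The only delicate point, and hence the main obstacle, is conceptual rather than computational: one must verify that every step in Example~\ref{Recur1}---in particular the recursion inherited from Lemma~5.4 of \cite{PraLalKum} together with the stronger $\min$ in \eqref{e2} noted from the proof of that lemma---is truly field-independent and applies even when some rows of $H'_{\text{des}}(\mathcal{C})$ have weight strictly less than $r+1$. Once that is in place, both inequalities of the corollary follow as a routine bookkeeping combination of Theorem~\ref{thm:dmin1} and the Tamo--Barg bound.
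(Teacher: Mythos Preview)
Your proposal is correct and matches the paper's own reasoning essentially verbatim: the paper simply remarks that the $e_i$ from \eqref{paritychecks1}--\eqref{e2} are field-independent (and valid even when some rows of $H'_{\text{des}}(\mathcal{C})$ have weight $<r+1$), so Theorem~\ref{thm:dmin1} yields the first inequality, and then substituting the Tamo--Barg bound \eqref{TamoBargDmin} for $d_{\text{min}}(n-e_i,k+i-e_i,r,t)$ gives the second. Your explicit peeling of the $j=0$ term and the check that $k+i-e_i\geq 1$ on $S$ are exactly the bookkeeping the paper leaves implicit.
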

	  Equation \eqref{Mindist_Bound2} is derived by substituting the bound \eqref{TamoBargDmin} for $d_{\text{min}}(n-e_i,k+i-e_i,r,t)$. Note that the calculation of $e_i$ using \eqref{paritychecks1},\eqref{e1},\eqref{e2} is independent of the field $\mathbb{F}_q$ and applies even if some rows in $H_{des}\mathcal{(C)}$ have weight $< r+1$. Since \eqref{Mindist_Bound2} is a simple observation on \eqref{Mindist_Bound1}, we skip the proof. 

   	
	\begin{note}\textbf{Tightness of the Bound}:
	We use \eqref{Mindist_Bound2} along with expression for $e_i$ given in Eg. \ref{Recur1} (Eq.\eqref{paritychecks1},\eqref{e1},\eqref{e2}) to calculate a bound on $d_{min}(n,k,r,t)$. The resulting bound is tighter than the bounds \eqref{TamoBargDmin},\eqref{WangDmin}. We plot our bound for $t=3$ in Fig.~\ref{fig:dmin_t3}. It can be seen from the plot in Fig.~\ref{fig:dmin_t3} that our bound is tigher than the bounds \eqref{TamoBargDmin},\eqref{WangDmin}.
	\end{note}
	
	\begin{figure}[h!]
		\centering
		\includegraphics[width=5in]{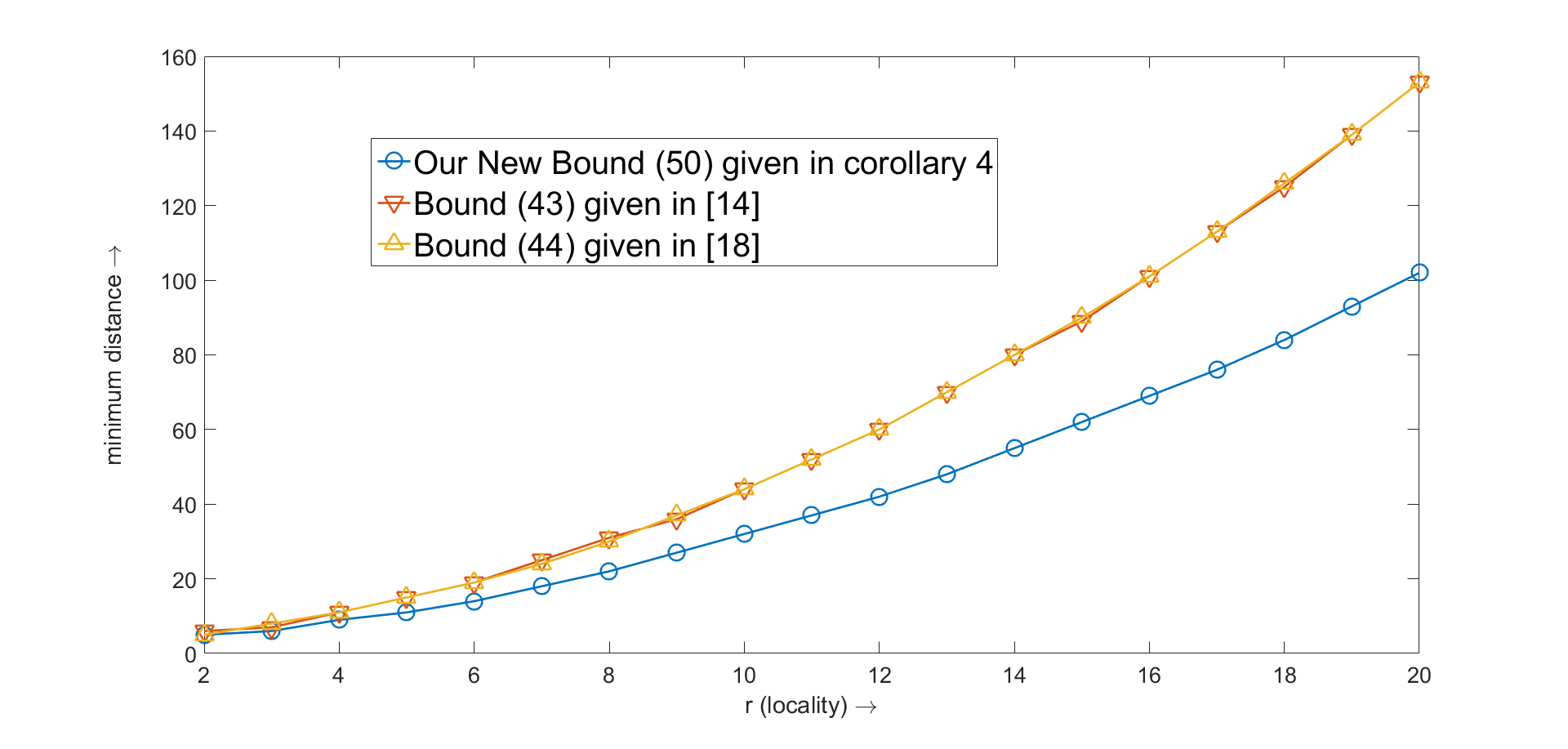}
		\caption[Example plot of locality vs minimum distance]{Plotting locality vs minimum distance for $t=3$ with $n= {r+3 \choose 3},k=\frac{nr}{r+3}$. Here we are comparing our bound \eqref{Mindist_Bound2} given in Corollary \ref{Cor1} with the bounds \eqref{TamoBargDmin},\eqref{WangDmin}. In the plot, bounds \eqref{TamoBargDmin},\eqref{WangDmin} overlap at lot of points. Note that our bound \eqref{Mindist_Bound2} is tighter than the bounds \eqref{TamoBargDmin},\eqref{WangDmin}.}
		\label{fig:dmin_t3}
	\end{figure}

\begin{cor} \label{Cor2}
	
	Let $\mathcal{C}$ be an $(n,k,r,t \geq 2)_{a}$ code over a field $\mathbb{F}_q$ and $H_{des}(\mathcal{C})$ be the corresponding local parity check matrix as defined in introduction section with only local parity checks.  
	Let $\mathcal{B}_{0} =\text{RowSpace}(H_{des}(\mathcal{C}))$.
	Let the dimension of $\mathcal{B}_{0}$ be $M$. If the code  $\mathcal{B}_{0}$ has a generator matrix with full rank such that the Hamming weight of all the columns of the matrix is $\leq \delta$ and all the rows have Hamming weight $\leq r+1$ then, 
	
	$e_i$ (an upper bound on $d^{\perp}_i$ ($i^{th}$ GHW of $\mathcal{C}^{\perp}$)), $\forall 1 \leq i \leq M$ can be calculated as follows: (Since the calculation of $e_i$ is a function of $M,\delta$, we refer to it as $e_i(M,\delta)$.)
	\bea
	e_1(M,\delta) &= & r+1,  \ \ J_1 = 0, \notag \\
	\text{For } M \geq i & \geq & 2 :  \notag \\
	J'_{1i} &=& r+1 - \left \lfloor \frac{\delta(n-e_{i-1}(M,\delta))}{M-i+1} \right \rfloor, \notag \\
	J'_{2i} &=& \left \lceil \frac{2e_{i-1}(M,\delta)-(i-1)-(i-1)(r+1)}{M-i+1} \right \rceil, \notag
	\eea	 
	\begin{align} 
	J_i = \begin{cases}
	\text{max}(J'_{1i},J'_{2i},1) & \text{if } F(M,\delta) \geq M, r+1-J_{i-1} \geq 2 \\
	\text{max}(J'_{1i},J'_{2i},0) & \text{if } F(M,\delta) \geq M, r+1-J_{i-1} < 2 \\
	\text{max}(J'_{1i},1) & \text{if } F(M,\delta) < M, r+1-J_{i-1} \geq 2 \\
	\text{max}(J'_{1i},0) & \text{if } F(M,\delta) < M, r+1-J_{i-1} < 2,
	\end{cases} \notag
	\end{align}
	\hspace{5cm}
	where $F(M,\delta)= n-e_{i-1}(M,\delta)$,
	\bea
	e_i(M,\delta) &=& e_{i-1}(M,\delta) + r+1 - J_i \label{Recurr_eM}
	\eea
	and Let minimum distance of $\mathcal{C}$ be $d_{\text{min}}(M,\delta)(n,k,r,t)$ then using \eqref{Mindist_Bound1},\eqref{TamoBargDmin} as in Corollary \ref{Cor1}: 
	\bea
	\hspace{-5cm}
	d_{\text{min}}(M,\delta)(n,k,r,t) & \leq & \min_{i \in S}  \  \ \  d_{\text{min}}(n-e_i(M,\delta),k+i-e_i(M,\delta),r,t) \notag \\
	& \leq & \min_{i \in S}  \  \ \   n-k-i+1-\sum_{j=1}^{t} \left \lfloor \frac{k+i-e_i(M,\delta)-1}{r^j} \right \rfloor \label{Mindist_Bound3}
	\eea
	where $S=\{ i: e_i(M,\delta)-i < k, 1 \leq i \leq M \}$.
	
	For a bound on $d_{\text{min}}$ independent of $M,\delta$ we take:
	\bea
	d_{\text{min}}(n,k,r,t) & \leq & \max_{(M,\delta) \in S_1} \ \ d_{\text{min}}(M,\delta)(n,k,r,t), \notag \\
		& \leq & \max_{(M,\delta) \in S_1} \min_{i \in S}  \  \ \   n-k-i+1-\sum_{j=1}^{t} \left \lfloor \frac{k+i-e_i(M,\delta)-1}{r^j} \right \rfloor \label{Mindist_Bound4}
	\eea
	where $S_1 = \{ \left \lceil n(1-R'(r,t)) \right \rceil \leq M \leq n-k, n-k \geq \delta \geq 0 \}$ and  $S=\{ i: e_i(M,\delta)-i < k, 1 \leq i \leq M \}$.
\end{cor}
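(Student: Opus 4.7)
The plan is to refine the recursion of Example \ref{Recur1} by exploiting the column-weight bound $\delta$ of a full-rank generator matrix of $\mathcal{B}_0 := \text{RowSpace}(H_{des}(\mathcal{C}))$, so as to obtain a tighter upper bound $e_i(M,\delta)$ on the $i$-th generalised Hamming weight $d_i^\perp$ of $\mathcal{C}^\perp$. Once these tighter $e_i(M,\delta)$ are established, the minimum distance bound \eqref{Mindist_Bound3} follows by substituting into Theorem \ref{thm:dmin1} and the Tamo-Barg bound \eqref{TamoBargDmin} exactly as in Corollary \ref{Cor1}, and \eqref{Mindist_Bound4} follows by ranging over admissible $(M,\delta) \in S_1$; the constraint $M \geq \lceil n(1-R'(r,t))\rceil$ arises because $M = \dim(\mathcal{B}_0) \geq n-k$ together with the rate bound \eqref{Rate_Bound}.

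First I would fix a generator matrix $G$ of $\mathcal{B}_0$ of full rank $M$ whose rows have weight at most $r+1$ and whose columns have weight at most $\delta$. The base case $e_1(M,\delta) = r+1$ is immediate since any single row of $G$ has weight at most $r+1$. Inductively, suppose after selecting $i-1$ linearly independent rows the combined support has cardinality at most $e_{i-1}(M,\delta)$. Any new row extending the selection to an $i$-dimensional subspace of $\mathcal{B}_0$ adds at most $(r+1) - J_i$ new coordinates, where $J_i$ denotes the number of non-zero positions of the new row that already lie inside the current support; setting $e_i(M,\delta) = e_{i-1}(M,\delta) + (r+1) - J_i$ and lower-bounding $J_i$ then produces the recursion \eqref{Recurr_eM}.

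The quantity $J'_{1i}$ is obtained by averaging: the $n - e_{i-1}(M,\delta)$ columns lying outside the current support together contain at most $\delta(n - e_{i-1}(M,\delta))$ non-zero entries of $G$, and these are distributed over the $M - (i-1)$ unselected rows. By pigeonhole, some unselected row has at most $\lfloor \delta(n - e_{i-1}(M,\delta))/(M-i+1) \rfloor$ entries outside the current support, hence at least $J'_{1i} = r+1 - \lfloor \delta(n - e_{i-1}(M,\delta))/(M-i+1) \rfloor$ entries inside it. The quantity $J'_{2i}$ is the intersection-count bound of Lemma 5.4 in \cite{PraLalKum}, unchanged: a double-count of pairwise support-overlaps forces the $i$-th support to intersect the union of the previous $i-1$ in at least $J'_{2i}$ coordinates. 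Taking the maximum of $J'_{1i}$, $J'_{2i}$, and a baseline of $0$ or $1$ yields the four-way case split in the statement; $F(M,\delta) \geq M$ is the hypothesis under which Lemma 5.4 is genuinely applicable (enough remaining rows and uncovered ground), while $r+1 - J_{i-1} \geq 2$ is what allows us to strictly enforce $J_i \geq 1$ instead of only $J_i \geq 0$, analogously to the $\min$ in \eqref{e2}.

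The main obstacle will be carefully verifying the case splits: one must check that $J'_{2i}$ and the baseline $J_i \geq 1$ are only invoked in regimes in which they hold, so that the stated $J_i$ is always a true lower bound on the overlap. The cleanest route is to revisit the argument of Lemma 5.4 in \cite{PraLalKum} and track exactly when its pigeonhole step requires enough remaining unselected rows, so that the guard $F(M,\delta) \geq M$ captures this condition; the $J_i \geq 1$ vs.\ $J_i \geq 0$ split is a routine observation in the same proof (whenever the previous step did not already absorb all the budget in the selected row, the next intersection is forced to be at least one). Once these technical gates are in place, substitution into Theorem \ref{thm:dmin1} and \eqref{TamoBargDmin} as in Corollary \ref{Cor1} gives \eqref{Mindist_Bound3}, and ranging over $(M,\delta) \in S_1$ gives the field-independent form \eqref{Mindist_Bound4}.
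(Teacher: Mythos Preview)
Your overall framework is right: greedy row selection from a full-rank generator $H_1$ of $\mathcal{B}_0$, and your averaging derivation of $J'_{1i}$ matches the paper exactly. The gap is in your treatment of $J'_{2i}$ and of the guard $F(M,\delta)\geq M$.

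The bound $J'_{2i}$ is \emph{not} the pairwise-overlap count of Lemma~5.4 in \cite{PraLalKum}; that lemma produces the recursion \eqref{e2} of Example~\ref{Recur1}, whose shape (denominator $i$, numerator $2e_i$) is different. The paper obtains $J'_{2i}$ from a column-weight argument that uses $t\geq 2$ directly: among the columns of $H_1$ indexed by $F_{i-1}=\bigcup_{j\leq i-1}\text{Supp}(h_{s_j})$, at most $i-1$ have weight $1$, so their total weight in $H_1$ is at least $2|F_{i-1}|-(i-1)$; subtracting the contribution $(i-1)(r+1)$ of the already-selected rows lower-bounds the total weight sitting in the $A_{i-1}$-columns of the $M-i+1$ remaining rows. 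To extend this count from $F_{i-1}$ to all of $A_{i-1}$, the paper \emph{chooses} the $e_{i-1}-|F_{i-1}|$ padding coordinates of $A_{i-1}$ to be columns of weight $\geq 2$ in $H_1$. This is precisely where $F(M,\delta)=n-e_{i-1}\geq M$ enters: at most $M$ columns of $H_1$ have weight $1$, so if $n-e_{i-1}\geq M$ there are enough weight-$\geq 2$ columns available for padding. Pigeonhole over the $M-i+1$ remaining rows then gives a row with overlap at least $J'_{2i}$. So the guard is about the availability of heavy padding columns, not about when Lemma~5.4 applies.

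The $J_i\geq 1$ branch when $r+1-J_{i-1}\geq 2$ is likewise argued directly in the paper (at least two fresh coordinates were added at the previous step, forcing nonzero mass into $H_2$ and hence a remaining row with at most $r$ entries outside $A_{i-1}$), rather than by reference to \cite{PraLalKum}. One minor slip in your write-up: you state $M=\dim(\mathcal{B}_0)\geq n-k$, but $\mathcal{B}_0\subseteq\mathcal{C}^{\perp}$ gives $M\leq n-k$; the lower bound $M\geq\lceil n(1-R'(r,t))\rceil$ comes from applying the rate bound \eqref{Rate_Bound} to the null space of $H_{des}(\mathcal{C})$, which contains $\mathcal{C}$.
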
   	
      \begin{proof}
      	The only thing we have to prove is that the expression \eqref{Recurr_eM} given for $e_i(M,\delta)$ gives an upper bound on $d^{\perp}_i$.\\
      Proof of equation \eqref{Recurr_eM} for calculating $e_i(M,\delta)$: \\
          To avoid cumbersome notation we refer to $e_i(M,\delta)$ as $e_i$ in the proof. In this proof we denote support of a vector v by $Supp(v)$.
          Let $\{h_1,...,h_M\}$ denote a set of $M$ codewords which are basis of $\mathcal{B}_{0}$ such that $|Supp(h_i)| \leq r+1, \forall 1 \leq i \leq M$ and the following matrix $H_1$ has all column wights $\leq \delta $ (by the assumption in the theorem):
            \bean
            H_1 & = &  \left[ 
            \begin{array}{c}
            	h_1 \\
            	\vdots \\ 
            	h_M
            \end{array}
            \right]
            \eean
           
         Lets take the codeword $h_1$ and set $e_1=r+1$. Let $A_1=Supp(h_1)$. Now assume that we have a set of co-ordinates $A_{i-1}$ such that $|A_{i-1}|=e_{i-1}$ and $\cup_{j=1}^{i-1} Supp(h_{s_j}) \subseteq A_{i-1}$ for some distinct $s_1,...,s_{i-1}$. Now we are going to select $e_i-e_{i-1}$ co-ordinates from $[n]-A_{i-1}$ and add these co-ordinates to $A_{i-1}$ to form $A_i$ such that $A_i$ contains the support of at least $i$ distinct codewords from $\{h_1,...,h_M\}$. Now write the matrix $H_1$ after permuting rows and columns such that the co-ordinates represented by $A_{i-1}$ are the first $e_{i-1}$ columns of the matrix and the first $i-1$ rows of the matrix are the rows $h_{s_1},..,h_{s_{i-1}}$ (upto permutation of columns). Hence the resulting matrix can be written as:
         \bean
         H'_1 & = &  \left[ 
         \begin{array}{c|c}
         	H_4  & 0 \\
         	H_2 & H_3		
         \end{array}
         \right] 
         \eean
         The matrix $H_4$ is the $i-1 \times e_{i-1}$ matrix with columns corresponding to the coordinates in $A_{i-1}$ and $h_{s_1},..,h_{s_{i-1}}$ as its rows (upto permutation of columns and after throwing away some zero weight columns). Now take the matrix $H_2$. Lets write the matrix $H_2$ as $H_2=[A | B]$ after permuting its columns such that  the matrix $A$ has exactly the columns corresponding to the co-ordinates $F_{i-1}=\cup_{j=1}^{i-1} Supp(h_{s_j})$. In $H'_1$, since the sum of weight of the columns indexed by $F_{i-1}=\cup_{j=1}^{i-1} Supp(h_{s_j})$ (i.e., sum of weight of columns corresponding to the co-ordinates in $F_{i-1}=\cup_{j=1}^{i-1} Supp(h_{s_j})$) is $\geq 2|F_{i-1}|-(i-1)$ (since $ t\geq 2 $, number of weight one columns  in $H'_1$ from among the columns indexed by $F_{i-1}$ is atmost $i-1$), the sum of weight of columns of $A$ is $\geq 2|F_{i-1}|-(i-1)-(i-1)(r+1)$. Now the columns of the matrix $H_4$ corresponding to the columns in $B$ has 0 weight. Hence sum weight of the columns in $B$ is at least $2(e_{i-1}-|F_{i-1}|)$. If not, we can remove these extra $e_{i-1}-|F_{i-1}|$ co-ordinates corresponding to the columns in $B$ from $A_{i-1}$ and add to $A_{i-1}$ arbitrarily chosen $e_{i-1}-|F_{i-1}|$ co-ordinates from the rest of $n-|F_{i-1}|$ co-ordinates (i.e., from co-ordinates outside $F_{i-1}=\cup_{j=1}^{i-1} Supp(h_{s_j})$) such that the columns in $H'_1$ corresponding to the chosen $e_{i-1}-|F_{i-1}|$ co-ordinates have hamming weight at least 2 in each of the column. Such a choice of the columns is always possible at each step $i-1$ as long as  $n-|F_{i-1}| \geq M+e_{i-1}-|F_{i-1}|$ as the number of columns having hamming weight one in $H'_1$ is atmost $M$. Hence under the condition $n-|F_{i-1}| \geq M+e_{i-1}-|F_{i-1}|$, the sum of weights of all columns in $H_2$ is $\geq 2|F_{i-1}|-(i-1)-(i-1)(r+1)+2(e_{i-1}-|F_{i-1}|) = 2e_{i-1}-(i-1)-(i-1)(r+1)$. The condition boils down to $n-e_{i-1} \geq M$. Now under this condition the average row weight of the matrix $H_2$ is at least $J''_{2i}=\frac{2e_{i-1}-(i-1)-(i-1)(r+1)}{M-i+1}$. Hence there will be a row $h'_2$ in $H_2$ with weight  $\geq \left \lceil J''_{2i} \right \rceil  =J'_{2i} $. The corresponding row in $H'_1$ will have weight $\leq r+1 - J'_{2i}$ outside the cordinates in $A_{i-1}$. 
         Now the sum of column weights in $H_3$ is atmost $\delta (n-e_{i-1})$. Hence the average weight of rows in $H_3$ is atmost $J''_{1i} = \frac{\delta (n-e_{i-1})}{M-i+1}$. Hence there is row $h'_1$ in $H_3$ with weight $\leq \left \lfloor J''_{1i} \right \rfloor =J'_{1i}$.
         If $r+1-J_{i-1} \geq 2$, the number of new co-ordinates that are added to $A_{i-2}$ to form $A_{i-1}$
          is at least 2. Hence the sum weight of all columns $H_2$ is alteast 1. Hence there is a row $h'_3$ in $H_3$ with weight $\leq r$.
          
          Now we pick a row in $H'_1$ from $[H_2 | H_3]$ part of the matrix from among the rows in $H'_1$ corresponding to $h'_1,h'_2,h'_3$ such that the row has least weight in the co-ordinates $[n]-A_{i-1}$. Adding the support of the picked row to $A_{i-1}$ we form $A'_i$ such that it contains the support of at least $i$ distinct codewords in $\{h_1,...h_M\}$ and has cardinality $\leq e_i$. Now add co-ordinates arbitrarily to $A'_i$ to form $A_i$ of cardinality exactly $e_i$. 
          
          Since $A_i$ contains the support of $i$ linearly independent codewords in the dual, it is clear the $d_i^{\perp} \leq |A_i|=e_i$ (The column and row permutations that we are doing, does not affect the upper bound calculation).
          \end{proof}
 Note that column weight $\leq \delta$ constraint in corollary \ref{Cor2} applies to some full rank generator matrix of RowSpace($H_{des}(\mathcal{C})$) and  may not apply to columns of $H_{des}(\mathcal{C})$ directly. Note that calculation of $e_i(M,\delta)$ is independent of field size $\mathbb{F}_q$.
\begin{note}\textbf{Tightness of the bound:}
           In the Fig.~\ref{fig:dmin_t31} and Fig.~\ref{fig:dmin_t32}, we plot the bound given by \eqref{Mindist_Bound3} and \eqref{Mindist_Bound4} respectively for $t=3$. It can be seen from Fig.~\ref{fig:dmin_t31} that the new bound given in \eqref{Mindist_Bound3} is tighter when plotted for a specific value of $M,\delta$ as it optimizes the bound on minimum distance for the given value of $M,\delta$. It can be seen from Fig.~\ref{fig:dmin_t32} that the bound given in \eqref{Mindist_Bound4} is still tighter than the bounds\eqref{TamoBargDmin}, \eqref{WangDmin} even after maximizing over all possible $M,\delta$. Our bound depending on $M,\delta$ might throw insight as to what $M,\delta$ is optimal for minimum distance for a given $n,k,r,t$. It may be expected that the least possible value of $M$ for a given $n,k,r,t$ to be optimal for minimum distance but we are far away from a proof of such statement. \\
          \end{note}
          
      	\begin{figure}[h!]
      		\centering

      		\includegraphics[width=5in]{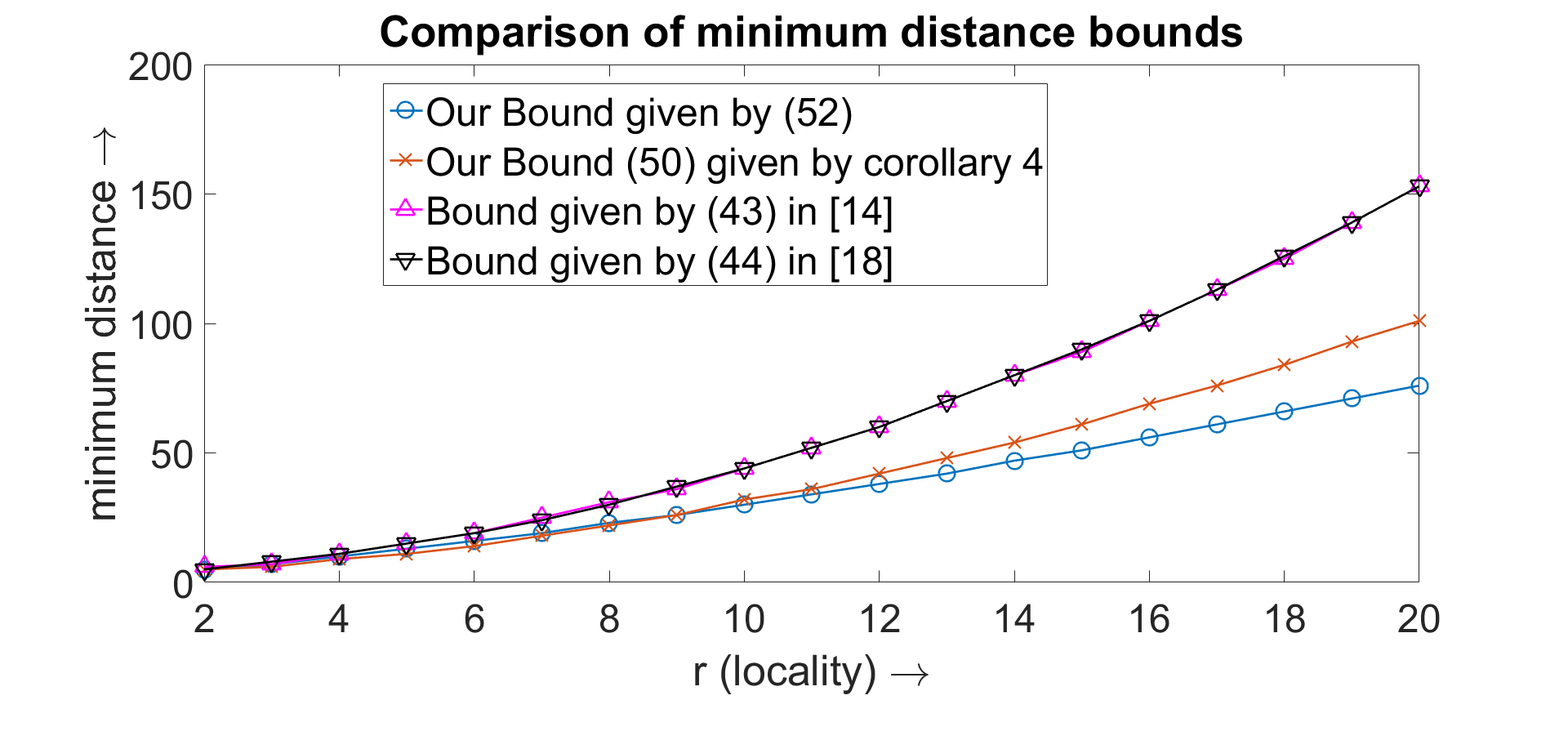}
      		\caption[Example plot of locality vs minimum distance]{Plotting locality vs minimum distance for $t=3$ with $n= {r+3 \choose 3},k=\frac{nr}{r+3}, M=n-k, \delta=t$. Here we are comparing the bounds in \eqref{Mindist_Bound3},\eqref{Mindist_Bound2} with the bounds \eqref{TamoBargDmin},\eqref{WangDmin}. Note that the bound \eqref{Mindist_Bound2} plotted above is independent of choice of $M,\delta$. In the plot, bounds \eqref{TamoBargDmin},\eqref{WangDmin} overlap at lot of points. Here we are plotting the bound in \eqref{Mindist_Bound3} with $M=n-k,\delta=t$ and this value of $M,\delta$ corresponds to the correct value for the code construction given in \cite{WanZhaLiu_Arb_Locality} for the given $n,r,t=3$.  Note that our bounds \eqref{Mindist_Bound3}, \eqref{Mindist_Bound2} are tighter than the bounds \eqref{TamoBargDmin},\eqref{WangDmin}.}
      		\label{fig:dmin_t31}
      	\end{figure}
%
%
      		      		\begin{figure}[h!]
      		      			\centering
      		      			\includegraphics[width=5in]{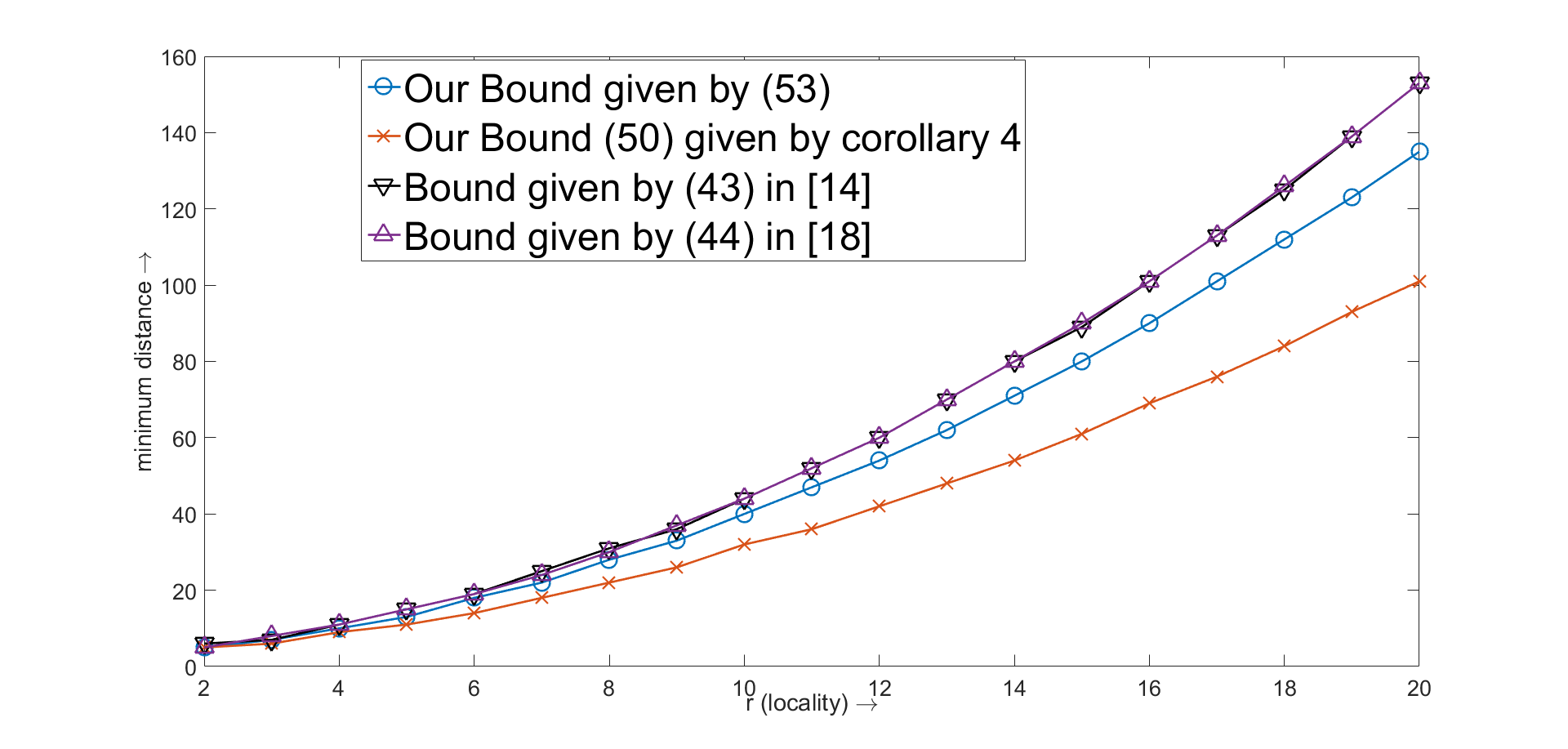}
      		      			\caption[Example plot of locality vs minimum distance]{Plotting locality vs minimum distance for $t=3$ with $n= {r+3 \choose 3},k=\frac{nr}{r+3}$.  Here we are comparing the bounds in \eqref{Mindist_Bound4},\eqref{Mindist_Bound2} with the bounds \eqref{TamoBargDmin},\eqref{WangDmin}. In the plot, bounds \eqref{TamoBargDmin},\eqref{WangDmin} overlap at lot of points.  Note that our bounds \eqref{Mindist_Bound4}, \eqref{Mindist_Bound2} are tighter than the bounds \eqref{TamoBargDmin},\eqref{WangDmin}.}
      		      			\label{fig:dmin_t32}
      		      		\end{figure}
      		
\begin{example}
	
	It can be seen from the corollary \ref{Cor2} and its proof that $d_i^{\perp} \leq ir+1$ for $r \geq 2, t \geq 2$. Hence using this upper bound, the bound in  \eqref{Mindist_Bound1} can be wriiten as:
	\bean
	d^q_{min}(n,k,r,t) \leq min_{\{ i: i(r-1)+1 < k \}}  \  \ \  d^q_{min}(n-ir-1,k-(i(r-1)+1),r,t) \\ 
	\eean
	which is tighter than the bound given in \cite{HuaYaaUchSie}. The bound given in \cite{HuaYaaUchSie} is for information symbol availability. But here we consider all symbol availability. Hence the tightening may be natural. But we would like to draw the attention to the fact that our bound can be further tightened by using more accurate upper bound $e_i$ (equation \eqref{paritychecks1},\eqref{e1},\eqref{e2}) on GHWs of dual.
	
\end{example}

\subsection{Binary $(n,k,r,t)_{sa}$ Codes Based on Partitions}
\begin{const} \label{C1}
	Let $r+1$ be a given integer. Let $\{P_1,...,P_h\}$ be a collection of partitions of $[n]$ ($n=(r+1)^g$ for some $g \geq 1$) with $P_i=\{Q_1^i,..,Q_{\frac{n}{r+1}}^i\}$, $\forall 1 \leq i \leq h$ such that $|Q_{x}^i|=r+1$, $|Q_{x}^i \cap Q_{y}^i|=0$, $\forall 1 \leq x \neq y \leq \frac{n}{r+1}$, $\forall 1 \leq i \leq h$ and $|Q_{x}^i \cap Q_{y}^j| \leq 1$, $\forall 1 \leq x,y \leq \frac{n}{r+1}$, $\forall 1 \leq i \neq j \leq h$. Let the maximum number of such partitions that can be formed be $T(n)$. i.e., $h \leq T(n)$. Then we can show by explicit recursive construction of partitions that (explained below):
	\bean
	T(n) \geq f(r+1) \ T\left(\frac{n}{r+1}\right) + 1 \Rightarrow T(n) \geq \frac{f(r+1)^g-1}{f(r+1)-1}, 
	\eean
	where $f(r+1)=N(r+1)+1$ and $N(r+1)$ is the maximum number of mutually orthogonal Latin  squares of order $r+1$.  For $1 \leq i \leq t, 1 \leq a \leq \frac{n}{r+1}$, the set $Q^i_a$, constructed using our recursive construction defines the support of a codeword $\underline{c}^i_a$ (a parity check) in the dual of our code. Form a binary matrix $H$ with the codewords $\{\underline{c}^i_a : 1 \leq i \leq t, 1 \leq a \leq \frac{n}{r+1}\}$ as rows. $H$ defines the parity check matrix of our construction. Note that for this construction to work we need to pick $t$ partitions out of $\frac{f(r+1)^g-1}{f(r+1)-1}$ partitions we constructed. Hence this construction gives an $(n,k,r,t)_{sa}$ code for some $k$, for any $t \leq \frac{f(r+1)^g-1}{f(r+1)-1}$.
\end{const}
We demonstrate the recursive construction for $r=1$. Although $r=1$ may look like a trivial case, the basic idea is clearly visible in this case and the construction for general $r$ is similar.
\ben
 \item Aim: Form a collection of partitions of $[n]=[2^g]$, $\{P_1,....P_{2^g-1}\}$ such that each set in a partition contains just 2 elements and if $\{a,b\}$ appears as a set in partition $P_i$ then the set must not appear in any other partition $P_j$, $i \neq j$.
 \item It can be recursively shown that such a collection of partition can be formed. $T(n)$ denotes the maximum number of such collection of partitions of $n=2^g$ elements. It will be seen that $T(n) \geq 2T(n/2)+1$. From this by recursively solving, we get $T(n) \geq 2^g -1$.
 The recursive bound $T(n) \geq 2T(n/2)+1$ and also the construction can be seen as follows:\\
 \ben
 \item Pick a partition $P_1$= $\{ A_1=\{a_1,a_2\},...,A_{2^{g-1}}=\{a_{2^g-1},a_{2^g}\} \}$ of $[n]$ arbitrarily.\\
 \item Let $W_{T(n/2)}= \{Q_1,...,Q_{T(n/2)}\}$ be a collection of $T(n/2)$ partitions of $[\frac{n}{2}]$ satisfying the required conditions.
 \item Now take a partition $Q_1$ from $W_{T(n/2)}$. Let the partition be: $Q_1$= $\{ B_1=\{b_1,b_2\},...,B_{2^{g-2}}=\{b_{2^{g-1}-1},b_{2^{g-1}} \} \}$.\\
  \item Now using the above partition $Q_1$ we form 2 more partitions $P_3,P_4$ of $[n]$ as follows:\\
 	Let $P_3=P_4=\emptyset$\\
 	For each set $B_j = \{b_{2j-1},b_{2j}\}$ from $Q_1$, do the following (we simply pick the sets $A_{j_1},A_{j_2}$ with $j_1=b_{2j-1}$ and $j_2=b_{2j}$  and combine the elements in $A_{j_1}=\{a_{2j_1-1},a_{2j_1}\}$ and $A_{j_2}=\{a_{2j_2-1},a_{2j_2}\}$ to form the sets as shown below):\\
 	\bean
 	   P_3 = P_3 \cup \{\{a_{2j_1-1},a_{2j_2-1}\},\{a_{2j_1},a_{2j_2}\}\} \\
 	   P_4 = P_4 \cup \{\{a_{2j_1-1},a_{2j_2}\},\{a_{2j_1},a_{2j_2-1}\}\}
 	\eean
 	The resulting collection of sets $P_3,P_4$ can be easily seen to be partitions of $[n]$.
 	Similarly we can form 2 more partitions of $[n]$ for each of the partition in $W_{T(n/2)}$. It is easy to see that the resulting collection of partitions of $[n]$ satisfy the required conditions since $W_{T(n/2)}$ satisfies the required conditions. Hence $T(n) \geq 2T(n/2)+1$.
 \een
\een

We demonstrate in the following, the recursive construction for general $r$.

Let $P_1^n = \{(Q_n^1)_1,...,(Q_n^1)_{\frac{n}{r+1}}\}$ be the natural partition of $[n]$ i.e. $\{\{1,...,r+1\},\{r+1+1,...,2(r+1)\},...,\{n-(r+1)+1,...,n\}\}$. Now assume we have constructed $h=T(\frac{n}{r+1})$ partitions $\{P_1^\frac{n}{r+1},...,P_{h}^\frac{n}{r+1}\}$ of $[\frac{n}{r+1}]$ satisfying the required intersection properties. Let one of these partitions be $P_i^\frac{n}{r+1} = \{(Q^i_\frac{n}{r+1})_1,...,(Q^i_\frac{n}{r+1})_{\frac{n}{(r+1)^2}}\}$. Each set of this partition has size $r+1$. Let $S_1,...,S_{N(r+1)}$ be mutually orthogonal Latin squares of order $r+1$. For $1 \leq i \leq N(r+1)$, the latin square $S_i$ is written as $r+1 \times r+1$ matrix. Let $S_0$ be an $r+1 \times r+1$ matrix: 
	\begin{equation}
		S_0 = \left[\begin{array}{c c c c}
			1 & 1 & \hdots & 1 \\
			2 & 2 & \hdots & 2 \\
			\vdots & \vdots & \hdots & \vdots \\
			r+1 & r+1 & \hdots & r+1 \\
		\end{array}
		\right].
	\end{equation}
	Let $S_{N(r+1)+1}$ be another matrix that is mutually orthogonal with $S_0,S_1,...,S_{N(r+1)}$ given by: 
	\begin{equation}
		S_{N(r+1)+1} = \left[\begin{array}{c c c c}
			1 & 2 & \hdots & r+1 \\
			1 & 2 & \hdots & r+1 \\
			\vdots & \vdots & \hdots & \vdots \\
			1 & 2 & \hdots & r+1 \\
		\end{array}
		\right]
	\end{equation}
	We follow the steps:\\
	For $2 \leq j \leq T(n)$, set $P^n_{j} = \emptyset$\\
	Do for all $\{(i,p,j):1 \leq i \leq h, 1 \leq j \leq N(r+1)+1, 1 \leq p \leq \frac{n}{(r+1)^2}\}$:
	\begin{enumerate}
		\item Let $(Q^i_\frac{n}{r+1})_p = \{\sigma^i_{(1,p)},...,\sigma^i_{(r+1,p)}\}$ and for $1 \leq x \leq \frac{n}{r+1}$, if $(Q^1_n)_x = \{\beta^1_{(1,x)},...,\beta^1_{(r+1,x)} \}$ then let $(\underline{q}^1_n)_x = [\beta^1_{(1,x)},...,\beta^1_{(r+1,x)}]$.
		Then form an $r+1 \times r+1$ matrix: \\
		$U = \left[\begin{array}{c}
		(\underline{q}^1_n)_{\sigma^i_{(1,p)}}\\
		\vdots\\
		(\underline{q}^1_n)_{\sigma^i_{(r+1,p)}}
		\end{array}
		\right]$.
		\item  Let $W(a,b)$ be $(a,b)^{th}$ entry of a matrix $W$. For $1 \leq x \leq r+1$, let $A^j_x = \{(a,b): S_j(a,b)=x \}$ and let $(Q^{(i-1)f(r+1)+j}_n)_{(p-1)(r+1)+x}=\{U(a,b) : (a,b) \in A^j_x\}$.\\
		For $1 \leq x \leq r+1$:\\
	    \bean
		P^n_{(i-1)f(r+1)+j} = P^n_{(i-1)f(r+1)+j} \cup \{(Q^{(i-1)f(r+1)+j}_n)_{(p-1)(r+1)+x}\}
		\eean
		i.e., $\{U(a,b) : (a,b) \in A^j_x\}$ is made as one of the sets of the partition $P^n_{(i-1)f(r+1)+j}$.
	\end{enumerate}
	$P^n_a$, $\forall 2 \leq a \leq hf(r+1)$ constructed as described above along with $P^n_1$ are the final partitions that we want.
	Hence we have explicitly constructed $hf(r+1)=T(\frac{n}{r+1}) f(r+1)$ partitions of $[n]$. It can be seen that these partitions satisfy the required intersection properties by using the fact that partitions of $[\frac{n}{r+1}]$ also satisfy the required intersection properties and the fact that we using orthogonal latin squares.
	\begin{enumerate}
	\item For all $1 \leq x \neq x' \leq r+1$: 
	$|(Q^{(i-1)f(r+1)+j}_n)_{(p-1)(r+1)+x} \cap (Q^{(i-1)f(r+1)+j}_n)_{(p-1)(r+1)+x'}| = 0$ due to the fact that $A_x^j \cap A_{x'}^j = \emptyset$.
	\item For all $1 \leq j \neq j' \leq f(r+1)$, $1 \leq x,x' \leq r+1$: 
	$|(Q^{(i-1)f(r+1)+j}_n)_{(p-1)(r+1)+x} \cap (Q^{(i-1)f(r+1)+j'}_n)_{(p-1)(r+1)+x'}| \leq 1$ due to the fact that $S_j$ and $S_{j'}$ are mutually orthogonal latin squares.
	\item For all $1 \leq p \neq p' \leq \frac{n}{(r+1)^2}$ and $1 \leq j,j' \leq f(r+1)$ and $1 \leq x,x' \leq r+1$: \\
	$|(Q^{(i-1)f(r+1)+j}_n)_{(p-1)(r+1)+x} \cap (Q^{(i-1)f(r+1)+j'}_n)_{(p'-1)(r+1)+x'}| = 0$ due to the fact that $(Q^i_\frac{n}{r+1})_p \cap (Q^i_\frac{n}{r+1})_{p'} = \emptyset$.
		\item For all $1 \leq i \neq i' \leq h$, $1 \leq p, p' \leq \frac{n}{(r+1)^2}$ and $1 \leq j,j' \leq f(r+1)$ and $1 \leq x,x' \leq r+1$: \\
		$|(Q^{(i-1)f(r+1)+j}_n)_{(p-1)(r+1)+x} \cap (Q^{(i'-1)f(r+1)+j'}_n)_{(p'-1)(r+1)+x'}| \leq 1$ due to the fact that $|(Q^{i}_\frac{n}{r+1})_{p} \cap (Q^{i'}_\frac{n}{r+1})_{p'}| \leq 1$ and the fact that we pick only one element from each row of $U$ by using latin squares to form a set of a partition.
	\end{enumerate}
	 Hence we have $T(n) \geq f(r+1) \ T\left(\frac{n}{r+1}\right)$ at the same time explicitly constructed $f(r+1) \ T\left(\frac{n}{r+1}\right)$ partitions satisfying the required conditions recursively.
%

For $r+1=q=p^m$, for $p$ a prime amd $m \geq 1$, the parameters $n,r,t$ of our construction match with that of a construction based on incidence matrix of lines and points in $F_{q^g}$. Our construction for $r+1=q=p^m$ has rate higher than the construction given in \cite{WanZhaLiu_Arb_Locality}. For picking $t$ partitions, we are left with the choice of picking it from $\frac{f(r+1)^g-1}{f(r+1)-1}$ partitions. We are currently exploring on the optimal choice of partitions.
So we do not make any explicit rate comparisons here. We leave this as open problem for future work. Our construction can also be used to form a matrix $H_1=[I \ | H]$ where $H_1$ defines a parity check matrix of an information symbol availability code which can be used in the setting \cite{RawPapDimVis_arxiv}.

\vspace{-0.2cm}
\subsection{Binary $(n,k,r,t)_{sa}$ Codes Based on Functions with Certain Structure}
\begin{const} \label{C2}
	Let $X = \{x_1,...,x_n\}$, $Y = \{y_1,...,y_l\}$. Let $\mathcal{F} = \{f_1,...,f_t\}$ where $f_i:X \rightarrow Y$, for $1 \leq i \leq t$. Define inverse set of a function as $f_i^{-1}(y) = \{x:f_i(x)=y\}$. Choose $\mathcal{F}$ such that $|f_i^{-1}(y)| \leq r+1$ for every $f_i \in \mathcal{F}$, $y \in Y$ and $|f_i^{-1}(y_l) \cap f_j^{-1}(y_p)| \leq 1$ for every $f_i \neq f_j \in \mathcal{F}$, or $f_i = f_j$ and $y_l \neq y_p \in Y$. We will define a code based on the above quantities by describing its parity check matrix as:
	\begin{itemize}
		\item Let $H$ be an $lt \times n$ matrix over $\mathbb{F}_2$. Index the columns of $H$ by distinct elements of $X$. Index the rows by 2-tuples $(f_i,y_j)$, $1 \leq i \leq t$ and $1 \leq j \leq l$.
		\item The entry $H((f_i,y),x)$ is 1 if $f_i(x)=y$ and 0 otherwise for every $x \in X, f_i \in \mathcal{F}, y \in Y$.
	\end{itemize}
	From the definition of $H$, it can be seen that $H$ is a parity check matrix of an $(n,k,r,t)_{sa}$ code over $\mathbb{F}_2$ for some $k$. 
\end{const}
\begin{example} \label{Ex22}
	For $q$ a power of a prime, Let $X = \mathbb{F}_q^{n_1}, Y = \mathbb{F}_q^{m_1}$ with $2m_1 \geq n_1, m_1 < n_1$. For $\underline{x} \in X$, the functions we choose are $f_i(\underline{x}) = A_i\underline{x}$, $1 \leq i \leq t$ where $A_i$ is an $m_1 \times n_1$ matrix over $\mathbb{F}_q$ such that $rank([\frac{A_i}{A_j}])=n_1$ for all $t \geq i \neq j \geq 1$ and $rank(A_i)=m_1$, $\forall 1 \leq i \leq t$ . It can be seen that the matrix $H$ formed as described in Construction \ref{C2} using $\{f_1=A_1,...,f_t=A_t\},X=\mathbb{F}_q^{n_1},Y=\mathbb{F}_q^{m_1}$ describes a strict $t$ availability code with $n=q^{n_1},r+1=q^{n_1-m_1}$. A special case of our construction for $m_1 = 1, n_1 = 2$ yields a construction with rate higher than the construction in \cite{WanZhaLiu_Arb_Locality} and it turns out it is based on orthogonal Latin squares \cite{OLatin}. We are currently exploring on which set of $A_i$, $\forall 1 \leq i \leq t$ are optimal. We therefore do not make any explicit rate comparisons. We can also take $[I | H]$ to be a parity check matrix of an information symbol availability code which can be used in the setting \cite{RawPapDimVis_arxiv}.
\end{example}

\bibliographystyle{IEEEtran}
\bibliography{bib_file}	

\end{document}